\theoremstyle{definition}
\newtheorem{definition}{Definition}
\newtheorem{theorem}{Theorem}
\DeclareMathOperator{\qubits}{\#\text{qubits}}
\newcommand{\numberofonlyoptimizeradd}{1,015}
\begin{document}

% ===start: IEEE===
%\history{Date of publication xxxx 00, 0000, date of current version xxxx 00, 0000.}
%\doi{10.1109/TQE.2020.DOI}
% ===end: IEEE===

% Use the \preprint command to place your local institutional report
% number in the upper righthand corner of the title page in preprint mode.
% Multiple \preprint commands are allowed.
% Use the 'preprintnumbers' class option to override journal defaults
% to display numbers if necessary
%\preprint{}

% ===start: arXiv===
\author[1]{Kaito Kishi}
\author[2]{Junpei Yamaguchi}
\author[2]{Tetsuya Izu}
\author[3]{Noboru Kunihiro}
\affil[1]{\textit{Quantum Laboratory, Fujitsu Research, Fujitsu Limited, 4-1-1 Kawasaki, Kanagawa 211-8588, Japan}}
\affil[2]{\textit{Data \& Security Laboratory, Fujitsu Research, Fujitsu Limited, 4-1-1 Kawasaki, Kanagawa 211-8588, Japan}}
\affil[3]{\textit{Institute of Systems and Information Engineering, University of Tsukuba, 1-1-1 Tennodai, Tsukuba, Ibaraki 305-8573, Japan}}
\date{}
% ===end: arXiv===

%Title of paper
%\title{Estimation of Shor's DLP Circuit for 2048-bit Modulos based on Quantum Simulator}
\title{Simulation of Shor algorithm for discrete logarithm problems with comprehensive pairs of modulo $p$ and order $q$}

\renewcommand{\abstractname}{\vspace{-\baselineskip}}
\twocolumn[
    \begin{@twocolumnfalse}
        \maketitle
\vspace*{-3em}
% ===end: arXiv===
\begin{abstract}
%暗号分野でよく用いられる素体上の離散対数問題は古典計算機では解くための多項式時間アルゴリズムが見つかっていない。
The discrete logarithm problem (DLP) over finite fields, commonly used in classical cryptography, has no known polynomial-time algorithm on classical computers.
%一方、量子計算機による多項式時間アルゴリズムがShorにより与えられており、その理論的な解析もEkeråらによりおこなわれてきている。
However, Shor has provided its polynomial-time algorithm on quantum computers.
%しかし、一般の標数pと位数qのペアについて動作する具体的な量子回路をシミュレートした例はほとんど存在しない。
Nevertheless, there are only few examples simulating quantum circuits that operate on general pairs of modulo $p$ and order $q$.
%本研究では、そのような量子回路を具体的に構成し、
%量子シミュレータとPRIMEHPC FX700を利用して、
%32量子ビットまでで可能な1860通りすべての$p,q$のペアについて離散対数問題を解いた。
In this paper, we constructed such quantum circuits and solved DLPs for all 1,860 possible pairs of $p$ and $q$ up to 32 qubits using a quantum simulator with PRIMEHPC FX700.
%ここから、これまでヒューリスティックに解析されてきた成功確率の具体的な値を得て、
%その検証を実施した。
From this, we obtained and verified values of the success probabilities, which had previously been heuristically analyzed by Ekerå.
%%その結果、DLPを解くShorアルゴリズムの成功確率は、位数$q$による非対称的な形の周期性がみられることを発見した。
%As a result, we found that the success probability of Shor's algorithm for solving the DLP exhibits periodicity with an asymmetric waveform determined by the order $q$.
%%その結果、位数$q$による周期関数として知られる、DLPを解くShorアルゴリズムの成功確率は、具体的には$q$について左右に非対称的な波形となることを確認した。
%その結果、位数$q$による周期関数として知られる、DLPを解くShorアルゴリズムの成功確率の詳細な波形の形を明確にした。
As a result, the detailed waveform shape of the success probability of Shor's algorithm for solving the DLP, known as a periodic function of order $q$, was clarified.
%また、標数が2048ビットのときに必要な量子回路のサイズを、90量子ビットまでの量子回路を具体的に構成して、それらのリソース値から外挿して推定した。
%ここから、Schnorr型よりも必要な量子回路リソースが大きく、かつ標数pの小さな、ある種の上界となるpとqの組合せを与えた。
%また、より大きな$p,q$のペアを使った900通りの量子回路を生成し、そこから得られた回路規模を外挿して、$p=2048$~bitでの回路規模をsafe-prime群とSchnorr群で比較した。
Additionally, we generated \numberofonlyoptimizeradd{} quantum circuits for larger pairs of $p$ and $q$, extrapolated the circuit sizes obtained, and compared them for $p=2048$~bits between safe-prime groups and Schnorr groups.
%古典暗号ではsafe-prime群とSchnorr群の強度は標数$p$が等しければ同じだが、Shorの量子アルゴリズムを使うと後者は前者でのいくつの$p$のビット数にまで強度が落ちるのかを定量的に示した。
While in classical cryptography, the cipher strength of safe-prime groups and Schnorr groups is the same if $p$ is equal, we quantitatively demonstrated how much the strength of the latter decreases to the bit length of $p$ in the former when using Shor's quantum algorithm.
%特に、加算回路にripple carry adderを用いた場合、Shorの量子アルゴリズムのもとでp=2048 bitのSchnorr groupの暗号強度はp=1024 bitのsafe-prime groupとほぼ同等であることを示した。
In particular, it was experimentally and theoretically shown that when a ripple carry adder is used in the addition circuit, the cryptographic strength of a Schnorr group with $p=2048$~bits under Shor's algorithm is almost equivalent to that of a safe-prime group with $p=1024$~bits.
% insert abstract here
\end{abstract}
% ===start: arXiv===
\vspace*{2em}
\end{@twocolumnfalse}
]
% ===end: arXiv===

% ===start: APS===
% insert suggested keywords - APS authors don't need to do this
%\keywords{}

%\maketitle must follow title, authors, abstract, and keywords
% ===end: APS===
% ===start: IEEE===
%\begin{keywords}
%Cryptography,
%digital signatures,
%public key cryptography,
%quantum circuit,
%quantum cryptography,
%quantum simulation
%%Enter key words or phrases in alphabetical 
%%order, separated by commas. For a list of suggested keywords, send a blank 
%%email to keywords@ieee.org or visit \underline
%%{http://www.ieee.org/organizations/pubs/ani\_prod/keywrd98.txt}
%\end{keywords}
%
%\titlepgskip=-15pt
%\maketitle
% ===end: IEEE===

% body of paper here - Use proper section commands
% References should be done using the \cite, \ref, and \label commands
%\section{}
% Put \label in argument of \section for cross-referencing
%\section{\label{}}
\section{Introduction}

%乗法的な有限群 $G$ のある元 $g$ が生成する巡回部分群を $\langle g \rangle$ とかくとき、巡回部分群のある元 $h\in\langle g\rangle$ に対して $h=g^s$ となる自然数 $0\leq s<q$ を求める問題を離散対数問題 (Discrete Logarithm Problem, DLP) と呼ぶ。
Let $G$ be a finite multiplicative group, and let $\langle g\rangle$ denote the cyclic subgroup generated by an element $g$ of $G$.
The problem of finding an integer $0\leq s<q$ such that $h=g^s$ for a given element $h\in\langle g\rangle$ is called the Discrete Logarithm Problem (DLP).
%離散対数問題の難しさは群によって変化するが、群 $G$ が標数 $p$ の有限体 $\mathbb{F}_p$ の既約剰余類群 $\mathbb{F}_p^\times$ であるとき、古典計算による多項式時間アルゴリズムは知られていないため、Diffie--Hellman (DH) 鍵共有~\cite{diffie-hellman}やDSA署名~\cite{dss}などの暗号アルゴリズムの安全性を保証するために利用されている。
The difficulty of the DLP varies depending on the group, but when $G$ is $\mathbb{F}_p^\times$, which is the multiplicative group of a finite field $\mathbb{F}_p$ of prime order $p$, no polynomial-time algorithm is known for classical computation.
This difficulty is utilized to ensure the security of cryptographic algorithms such as Diffie--Hellman (DH) key exchange~\cite{diffie-hellman} and Digital Signature Algorithm (DSA)~\cite{dss}.
%既約剰余類群 $\mathbb{F}_p^{\times}$ における離散対数問題を解く古典アルゴリズムとして
%数体篩法 \cite{nfs} が知られている.
The Number Field Sieve (NFS)~\cite{nfs} is a known classical algorithm for solving the DLP in the multiplicative group $\mathbb{F}_p^\times$.
%数体篩法を用いたこれまでの離散対数問題の求解実験では
%標数が240桁 (795ビット) の問題までが解けている (表 \ref{tab:DLPrecords} 参照) ことと, 
Experimental results using the NFS have solved DLP instances with modulos $p$ up to 240~digits (795~bits) (See Table~\ref{tab:DLPrecords}),
%数体篩法の計算量が $p$ のサイズに関する
and the computational complexity of the NFS is sub-exponential in the size of $p$,
%準指数関数 $O(\exp(((64/9)^{1/3}+o(1))(\ln p)^{1/3}(\ln\ln p)^{2/3}))$ になることから,
specifically $O(\exp(((64/9)^{1/3}+o(1))(\ln p)^{1/3}(\ln\ln p)^{2/3}))$.
%DH 鍵共有や DSA 署名を利用するには
%$p$ を2048ビット以上に設定することが推奨されている \cite{barker2018nistrecommendation}.
Therefore, it is recommended to set $p$ to be at least 2048~bits for DH key exchange and DSA~\cite{barker2018nistrecommendation}.
%一般的な巡回群における離散対数問題を解く古典アルゴリズムである Pohlig–Hellman 法 \cite{pohlig-hellman} 
%によると, 位数 $q$ は大きな素数であることが望ましい.
According to the Pohlig--Hellman algorithm~\cite{pohlig-hellman}, a classical algorithm for solving DLP in general cyclic groups, it is desirable for the order $q$ to be a large prime.
%既約剰余類群 $\mathbb{F}_p^{\times}$ においては常に $q|(p-1)$ が成立するので,
%$q=(p-1)/2$ が素数となるとき (つまり素数 $p$ が safe prime になるとき)に 
%Polig-Hellman 法に対して最も高い耐性を持つ.
In the multiplicative group $\mathbb{F}_p^\times$, $q$ always divides $p-1$, so when $q=(p-1)/2$ is prime (i.e., $p$ is a safe-prime), the group has the highest resistance to the Pohlig--Hellman algorithm.
%\textcolor{red}{（$q$は素数でありさえすればPohlig--Hellmanアルゴリズムは適用できないので、その大きさについては関係ない気がしますが、どうなのでしょうか？）}
%本稿では, $p,q$ が素数で $q=(p-1)/2$ を満たす組み合わせ $(p,q)$ をSafe-prime型と呼ぶ.
In this paper, we refer to pairs $(p,q)$ where $p$ and $q$ are primes and $q=(p-1)/2$ as safe-prime groups.
%DSA署名では素数 $q$ による剰余計算が必要になるため,
%$q$ は小さい方が望ましい.
For DSA signatures, it is desirable for $q$ to be small because modular arithmetic with the prime $q$ is required.
%位数 $q$ の一般的な巡回群における離散対数問題を解く古典アルゴリズムである Baby-step Giant-step 法
%\cite{baby-step} などの平方根法の計算量は $q$ のサイズに関する指数関数 $O(\sqrt{q})$ となるため,
%数体篩法に必要な計算時間と平方根法に必要な計算時間を比較することで,
%$p$ が2048ビットならば $q$ は224ビットまたは256ビット,
%$p$ が3072ビットならば $q$ は256ビットなどに設定することが推奨されている \cite{dss,dss-fips186-5}.
The computational complexity of square root algorithms such as the Baby-step Giant-step method~\cite{baby-step}, a classical algorithm for solving the DLP in general cyclic groups of order $q$, is exponential in the size of $q$, specifically $O(\sqrt{q})$. By comparing the computational time required for the NFS and square root algorithms, it is recommended to set $q$ to 224 or 256 bits when $p$ is 2048 bits, and $q$ to 256 bits when $p$ is 3072 bits~\cite{dss,dss-fips186-5}.
%本稿ではこのような関係を満たす組み合わせ $(p,q)$ をSchnorr型と呼ぶ.
In this paper, we refer to pairs $(p, q)$ that satisfy such relationships as Schnorr groups.
%重要なこととして、古典計算では$p$が一致し、$q$がSchnorr groupの$q$以上であれば、解くのに必要な計算量は変わらない。
Importantly, in classical computation, if $p$ matches and $q$ is at least the $q$ of the Schnorr group, the computational effort required to solve the DLP is the same.

%1994年に Shor は離散対数問題を多項式時間で求解可能な量子アルゴリズムを考案した \cite{shor}．
In 1994, Shor proposed a quantum algorithm that can solve the DLP in polynomial-time~\cite{shor-alt,shor}.
%このアルゴリズムは量子ビットの誤りを許容しないため,
%現在開発されている NISQ (Noisy Intermediate Scale Quantum) タイプの量子計算機において
%大規模な問題を解くことはできないが, 
%誤りのない FTQC (Fault Torelant Quantum Computer) タイプの量子計算機が実現した場合,
%離散対数問題を利用した暗号アルゴリズムが解読される危険性が高い.
This algorithm does not tolerate errors in qubits, so it cannot solve large-scale problems on currently developed NISQ (Noisy Intermediate Scale Quantum) computers.
However, if an FTQC (Fault-Tolerant Quantum Computer) is realized, there would be a high risk that cryptographic algorithms based on the DLP will be broken.

%Shorの量子アルゴリズムは素因数分解問題を多項式時間で解くことも可能であるため,
%%量子計算機実機・シミュレータでの素因数分解実験や
%2048ビットRSA型合成数を素因数分解するために必要なリソース量の見積もりが報告されている
Shor's quantum algorithm can also solve the factorization problem in polynomial-time,
and estimation of the resources required to factor a 2048~bit RSA composite number have been reported~\cite{Gidney2021howtofactorbit,Gouzien2021factoring,yamaguchi}.
%\cite{Gidney2021howtofactorbit}ではthe coset representation of modular integers, windowed arithmetic, oblivious carry runwaysを利用してtoffoli gatesをなるべく減らしたうえで、magic state factoryの時間を考慮した表面符号で実装した際の考察をしている。
In \cite{Gidney2021howtofactorbit}, the coset representation of modular integers, windowed arithmetic, and oblivious carry runways are used to minimize the number of Toffoli gates, and considerations are made for implementation using surface codes that account for the time of the magic state factory.
%そして、2048ビットRSA型合成数を素因数分解する際に必要なtoffoli gate数および計算時間を見積もっている。
This approach estimates the number of Toffoli gates and the computation time required to factorize a 2048~bit RSA composite number.
%似たアプローチとして、\cite{Gouzien2021factoring}では3次元色符号と量子メモリを利用して、\cite{Gidney2021howtofactorbit}と比べて実行時間は増えるものの、少ない量子ビット数での素因数分解実行時間を見積もっている。
Similarly, \cite{Gouzien2021factoring} estimates the factorization time using fewer qubits, although with increased time compared to \cite{Gidney2021howtofactorbit}, by using 3D color codes and quantum memory.
% とはいえ、このリソース推定値は実機の構成に強く依存し、特にnon-Cliffordゲートの実現方法はSTARやmagic state cultivationで大きく改善されてきており、現在は相対的にTやToffoli数に縛られない純粋なゲート数での見積もりも求められる。
Nevertheless, this resource estimate is highly dependent on the configuration of the actual quantum machine, and particularly the implementation of non-Clifford gates has been significantly improved by methods such as STAR~\cite{akahoshi2023partiallyftqc,toshio2024practicalquantumadvantage,akahoshi2024compilationoftrotter}, zero-level distillation~\cite{Itogawa2024zerolevel}, and magic state cultivation~\cite{gidney2024msc}.
%Currently, estimates based on 
%\textcolor{red}{(TODO: refer to STAR and MSC\cite{gidney2024msc} to do more than non-Clifford))}
%一方、ボトムアップ的にシミュレータ上で実際に実行可能な量子回路を構成して大規模にシミュレートし、そこで得られた具体的なゲート数から外挿することで、2048ビット合成数でのゲート数を見積もる方法も存在する\cite{yamaguchi,yamaguchi2023experimentsresource}。
On the other hand, a bottom-up approach involves constructing quantum circuits that can be practically executed on simulators and extrapolating from the obtained gate counts to estimate the gate count for a 2048~bit composite number~\cite{yamaguchi,yamaguchi2023experimentsresource}.

%一方、DLPを解くShorアルゴリズムについては、素因数分解よりも見積もりの解析が少ない。
There is less analysis on the estimation of solving the DLP using Shor's algorithm compared to the factorization.
%先述した\cite{Gidney2021howtofactorbit}では、DLPについてもsafe-prime groupとSchnorr groupそれぞれについて計算時間を見積もっている。
In the aforementioned \cite{Gidney2021howtofactorbit}, the computation time for DLP is estimated for both safe-prime groups and Schnorr groups.
%Shorのアルゴリズム\cite{shor-alt}やEkeråのアルゴリズム\cite{ekera2021quantumalgorithmsgeneraldlp}、Ekerå--Håstad\cite{ekera2017quantumalgorithmshortdlp}のアルゴリズムにより、後者の方が前者よりも短い時間で求解可能であると計算されている。
It shows that the latter can be solved in a shorter time than the former using Shor's algorithm~\cite{shor-alt}, Ekerå's algorithm~\cite{ekera2021quantumalgorithmsgeneraldlp}, and Ekerå--Håstad's algorithm~\cite{ekera2017quantumalgorithmshortdlp}.
%さらに、\cite{hhan2024quantumcomplexityfordlp}においては$\Omega(\log q)$ depth of group operation queries必要であると理論的に示している。
Furthermore, \cite{hhan2024quantumcomplexityfordlp} theoretically demonstrates that $\Omega(\log q)$ depth of group operation queries is required for solving the DLP.
%しかし、Shorのアルゴリズムにかかる時間で最も支配的な冪剰余計算を含めたときに、$p$と$q$が具体的にどのように時間計算量に関わるかは不明瞭である。
However, it remains unclear how $p$ and $q$ specifically affect the time complexity when including the most time-consuming part of Shor's algorithm, the modular exponentiation.
%また、成功確率のヒューリスティックな理論解析がEkeråによりなされている\cite{martin2019revisitingshor}が、厳密に正しくShorの量子アルゴリズムを動かしたときにその通りになるかは検証する必要がある。
Although heuristic theoretical analysis of the success probability has been conducted by Ekerå~\cite{martin2019revisitingshor}, it needs to be verified whether this holds true when Shor's quantum algorithm is executed precisely.
%実装例としては、Qiskitによるシミュレータ上での実験\cite{mandl2022implementationsshordlp}や、実機での実験が存在する\cite{aono}。
As for implementation examples, there are experiments on simulators using Qiskit~\cite{mandl2022implementationsshordlp} and on actual quantum computers~\cite{aono}.
%前者はゲート数や成功確率をプロットしているものの、それらが$p$と$q$のペアにどのように依存するかは考慮されていない。
The former plots the number of gates and success probability, but does not consider how these depend on the pair of $p$ and $q$.
%後者については、$2^x\equiv 1\pmod 3$ ($p=3,q=2$) を解く実験のみが成功しており、そこから大きな値での解析は難しい。
For the latter, only the experiment solving $2^x\equiv 1\pmod 3$ ($p=3,q=2$) has been successful, making it difficult to analyze for larger values.
%特に量子計算機がFTQCに近づいていくにつれ、より大きな$p$や$q$についてShorアルゴリズムは実行されるであろうから、その周辺での具体的な計算例を事前に集めておく必要がある。
Especially as quantum computers approach FTQC, Shor's algorithm will likely be executed for larger $p$ and $q$, so it is necessary to gather specific computational examples around this in advance.
%量子計算機がFTQCに近づいていくにつれ、まずは$p$や$q$がある程度小さいところからShorアルゴリズムは実行されるであろうから、その周辺での具体的な計算例を集めておく需要がある。
%他方、RegevによるShorアルゴリズムの多次元版\cite{regev2023efficientquantumfactoring}を離散対数問題にさらに拡張した研究が存在する\cite{ekera2024extendingregevsfactoringtodlp}が、これはゲート数が削減されるものの量子ビット数が増大して量子シミュレーションが困難になるため、今回は取り扱わない。
On the other hand, there is research extending Regev's multidimensional version of Shor's algorithm~\cite{regev2023efficientquantumfactoring} to the DLP~\cite{ekera2024extendingregevsfactoringtodlp},
but this will not be addressed here as it increases the number of qubits, making quantum simulation difficult despite the reduction in gate count.

%そこで、本研究ではDLPを解くShorアルゴリズムの量子回路を大規模にシミュレートした。
In this study, we comprehensively simulated the quantum circuits of Shor's algorithm for solving the DLP.
%加算回路にQ-ADD\cite{qadd}を用いたときに32量子ビットまでで可能なすべての$p$,$q$の組合せ1860通りの離散対数問題を量子シミュレータ上で解き、そのときの実際の成功確率を得た。
We solved DLPs on a quantum simulator for all 1,860 pairs of $p$ and $q$ possible up to 32~qubits using the Q-ADD~\cite{qadd} adder circuit,
and obtained the actual success probabilities.
%そこから、成功確率の分布や位数$q$の大きさによる変化を、Ekeråの解析\cite{martin2019revisitingshor}と比較した。
From this, we compared the distribution of success probabilities and the changes due to the size of the order $q$ with Ekerå's heuristic analysis~\cite{martin2019revisitingshor}.
%また、より大きな$p$,$q$の組合せ900通りを選び、加算回路にR-ADD\cite{radd}を用いたときの量子回路の最適化前後のゲート数や深さを得た。
Additionally, we selected \numberofonlyoptimizeradd{} pairs of larger $p$ and $q$, and obtained the number of gates and depth of the quantum circuits before and after optimization using the R-ADD~\cite{radd} adder circuit.
%そこから、\cite{yamaguchi,yamaguchi2023experimentsresource}のように2048ビットでの回路規模を外挿してボトムアップ的に推定した。
Using these results, we extrapolated the circuit scale for 2048~bit as in \cite{yamaguchi,yamaguchi2023experimentsresource} in a bottom-up manner.
%こうして、古典計算の枠組みではsafe-prime groupと同等の強度であるSchnorr groupが、量子計算機のもとではどの大きさの$p$のsafe-prime groupの強度にまで低下するのかを具体的に示した。
In this way, we specifically demonstrated how the cipher strength of the Schnorr group, which is equivalent to the safe-prime group in classical computation, decreases to the strength of a safe-prime group with a smaller $p$ under quantum computation.

In this paper, we use the following notations.
%自然数 $x>0$ に対し, 2を底とする対数を$\log x$，自然対数を$\ln x$と表す．
For a natural number $x>0$, the logarithm to the base 2 is denoted by $\log x$, and the natural logarithm is denoted by $\ln x$.
%また0以上の整数$x$に対し，そのサイズ (ビット長) を $|x|$, 
For a non-negative integer $x$, its size (bit length) is denoted by $|x|$,
%その小数点以下を四捨五入する関数を$\lceil x\rfloor$と表す．
and the function that rounds $x$ to the nearest integer is denoted by $\lceil x\rfloor$.

\begin{table*}[h]
    \begin{center}
    \caption{
      %既約剰余類群 $\mathbb{F}_p^{\times}$ における離散対数問題の求解記録の推移
      The progress of solving records for the DLP in the multiplicative group of a finite field $\mathbb{F}_p^{\times}$.
      %なお、strong primeとは、大きい素数$p$のうち$p-1$と$p+1$の最大の素因数がともに大きいものを指し、詳しくは\cite{rivest2001arestrongprimes}を参照。
      A strong-prime refers to a large prime number $p$ for which the largest prime factors of both $p-1$ and $p+1$ are also large; for more details, see \cite{rivest2001arestrongprimes}.
    }
    \label{tab:DLPrecords}
    \begin{tabular}{llll}
    \hline
      %求解日         & 標数 & \multicolumn{1}{c}{求解者}                     & \multicolumn{1}{c}{備考} \\
      Date solved & Modulo & \multicolumn{1}{c}{Solver} & \multicolumn{1}{c}{Remarks} \\
    \hline
      %2005年06月18日 & 130桁 & Joux, Lercier \cite{Jou05}                   & パターン 1 (strong prime) \\
      June 18, 2005 & 130 digits & Joux, Lercier \cite{Jou05} & strong-prime \\
      %2006年12月22日 & 135桁 & Dorofeev, Dygin, Matyukhin \cite{Dor06}      & パターン 1 (strong prime) \\
      December 22, 2006 & 135 digits & Dorofeev, Dygin, Matyukhin \cite{Dor06} & strong-prime \\
      %2007年02月05日 & 160桁 & Kleinjung \cite{Kle07}                       & パターン 1 (safe prime) \\
      February 5, 2007 & 160 digits & Kleinjung \cite{Kle07} & safe-prime \\
      %2014年06月11日 & 180桁 & Bouvier, Gaudry, Imbert et al. \cite{BGI+14} & パターン 1 (safe prime) \\
      June 11, 2014 & 180 digits & Bouvier, Gaudry, Imbert et al. \cite{BGI+14} & safe-prime \\
      %2016年06月16日 & 232桁 & Kleinjung, Diem, Lenstra et al. \cite{Kle16} & パターン 1 (safe prime) \\
      June 16, 2016 & 232 digits & Kleinjung, Diem, Lenstra et al. \cite{Kle16} & safe-prime \\
      %2019年02月02日 & 240桁 & Boudot et al. \cite{Tho19,BGG+20a,BGG+20b}   & パターン 1 (safe prime) \\
      February 2, 2019 & 240 digits & Boudot et al. \cite{Tho19,BGG+20a,BGG+20b} & safe-prime \\
    \hline
    \end{tabular}
    \end{center}
\end{table*}

\section{Shor's Algorithm for the DLP}

%離散対数問題を解くShorのアルゴリズムは大きく2段階に分けられる：
Shor's algorithm for solving the DLP can be divided into two main stages:
\begin{enumerate}
    %\item 冪剰余の計算および逆量子フーリエ変換，そして測定をおこなう量子計算パート
    \item The quantum computation part, which involves the calculation of modular exponentiation, the inverse quantum Fourier transform, and measurement.
    %\item 測定結果を利用して$s$を得る古典計算パート
    \item The classical computation part, which uses the measurement results to obtain $s$.
\end{enumerate}

%それぞれ順を追って記述する．
Each step will be described in sequence.

\subsection{The quantum computation part}

%量子ビットの集合を3つのレジスタに分け，それぞれの量子ビット数を$v_1,v_2,w\in\mathbb{N}$とし，次の状態をアダマールゲートにより準備する．
A set of qubits is divided into three registers, with the number of qubits in each register denoted as $v_1, v_2, w\in\mathbb{N}$, and the following state is prepared using the Hadamard gate:
\begin{align}
    \frac{1}{\sqrt{2^{v_1+v_2}}}\sum_{x_1=0}^{2^{v_1}-1}\sum_{x_2=0}^{2^{v_2}-1}\ket{x_1}\ket{x_2}\ket{0^w}.
\end{align}
Next, the X gate is applied to the least significant bit of the third register:
\begin{align}
    \frac{1}{\sqrt{2^{v_1+v_2}}}\sum_{x_1=0}^{2^{v_1}-1}\sum_{x_2=0}^{2^{v_2}-1}\ket{x_1}\ket{x_2}\ket{0\cdots 01}.
\end{align}
%次に冪乗剰余 $h^{x_1} g^{x_2} \bmod{p}$ を計算し,
%3つめのレジスタに保存する. 
Then, the modular exponentiation $h^{x_1} g^{x_2} \bmod{p}$ is computed and stored in the third register.
%ここで素数 $p$ による剰余計算が必要となるが,
%具体的な計算法としては R-ADD，GT-ADD，Q-ADD などが知られている \cite{kunihiro}.
The specific method for this computation is described in the later section Construction of Quantum Cirtuit.
%次に，冪剰余のオラクル計算を実行する．
%今回は可能な限り大きな$p$, $q$について量子シミュレーションができるよう，必要な補助量子ビット数を抑えられる手法~\cite{qadd}を用いた．
%一方，量子ビット数よりもゲート数を抑える必要のある場合は\cite{radd}で置き換えることができる\cite{kunihiro}．
%その計算を実行した結果が以下となる．
The result of this computation is as follows:
\begin{align}
    \frac{1}{\sqrt{2^{v_1+v_2}}}\sum_{x_1=0}^{2^{v_1}-1}\sum_{x_2=0}^{2^{v_2}-1}\ket{x_1}\ket{x_2}\ket{h^{x_1}g^{x_2}\bmod p}.
\end{align}
%これは以下と等価である\cite{qcqi}．
This is equivalent to the following~\cite{qcqi}:
\begin{align}
    &\frac{1}{\sqrt{2^{v_1+v_2}q}}\sum_{l=0}^{q-1}\sum_{x_1=0}^{2^{v_1}-1}\sum_{x_2=0}^{2^{v_2}-1}e^{2\pi i(slx_1+lx_2)/q} \nonumber\\
    &\quad\cdot\ket{x_1}\ket{x_2}\ket{\hat{f}(sl,l)} \\
    =&\frac{1}{\sqrt{2^{v_1+v_2}q}}\sum_{l=0}^{q-1}\left(\sum_{x_1=0}^{2^{v_1}-1}e^{2\pi ix_1(sl/q)}\ket{x_1}\right) \nonumber\\
    &\quad\cdot\left(\sum_{x_2=0}^{2^{v_2}-1}e^{2\pi ix_2(l/q)}\ket{x_2}\right)\ket{\hat{f}(sl,l)}
\end{align}
%ただし， $\ket{\hat{f}(l_1,l_2)}=\frac{1}{\sqrt{q}}\sum_{j=0}^{q-1}e^{-2\pi il_2j/q}\ket{g^j\bmod p}$とする．
where $\ket{\hat{f}(l_1,l_2)}=\frac{1}{\sqrt{q}}\sum_{j=0}^{q-1}e^{-2\pi il_2j/q}\ket{g^j\bmod p}$.
%逆量子フーリエ変換を適用すると次のようになる．
Applying the inverse quantum Fourier transform results in the following:
\begin{align}
    \frac{1}{\sqrt{q}}\sum_{l=0}^{q-1}\ket{\widetilde{sl/q}}\ket{\widetilde{l/q}}\ket{\hat{f}(sl,l)}.
\end{align}
%ここで，第1レジスタの$\widetilde{sl/q}$は，
Here, $\widetilde{sl/q}$ in the first register is the $v_1$-bit approximation of $\beta/q$ in
\begin{align}
    sl/q=\alpha+\beta/q\quad (\alpha,\beta\in\mathbb{Z}, 0<\beta<q) \label{eq:tilde-sl-q}
\end{align}
%のうち$\beta/q$の$v_1$ビット近似値であり，
%第2レジスタの$\widetilde{l/q}$は$l/q$の$v_2$ビット近似値である．
and $\widetilde{l/q}$ in the second register is the $v_2$-bit approximation of $l/q$.
%この第1，第2レジスタを測定して量子計算パートは終了となる．
Measuring the first and second registers concludes the quantum computation part.
%ここまでの手続きを量子回路で記述すると図\ref{fig:circuit-standard-dlp}のようになる。
The procedure up to this point is described by the quantum circuit shown in Figure~\ref{fig:circuit-standard-dlp}.

\begin{figure*}
\centering
\begin{tikzpicture}
\node[scale=0.6] {
\begin{quantikz}
    \lstick[5]{1st register} &[1em] \lstick{$\ket{+}$}     & \ctrl{10}                   & \qw                         & \cdots & \qw            & \qw            & \qw                         & \qw                         & \cdots & \qw            & \qw \\
                             & \lstick{$\ket{+}$}          & \qw                         & \ctrl{9}                    & \cdots & \qw            & \qw            & \qw                         & \qw                         & \cdots & \qw            & \qw \\
                             & \vdots \\
                             & \lstick{$\ket{+}$}          & \qw                         & \qw                         & \cdots & \ctrl{7}       & \qw            & \qw                         & \qw                         & \cdots & \qw            & \qw \\
                             & \lstick{$\ket{+}$}          & \qw                         & \qw                         & \cdots & \qw            & \ctrl{6}       & \qw                         & \qw                         & \cdots & \qw            & \qw \\
    \lstick[5]{2nd register} & \lstick{$\ket{+}$}          & \qw                         & \qw                         & \cdots & \qw            & \qw            & \ctrl{5}                    & \qw                         & \cdots & \qw            & \qw \\
                             & \lstick{$\ket{+}$}          & \qw                         & \qw                         & \cdots & \qw            & \qw            & \qw                         & \ctrl{4}                    & \cdots & \qw            & \qw \\
                             & \vdots \\
                             & \lstick{$\ket{+}$}          & \qw                         & \qw                         & \cdots & \qw            & \qw            & \qw                         & \qw                         & \cdots & \ctrl{2}       & \qw \\
                             & \lstick{$\ket{+}$}          & \qw                         & \qw                         & \cdots & \qw            & \qw            & \qw                         & \qw                         & \cdots & \qw            & \ctrl{1} \\
                             & \lstick{$\ket{0\cdots 01}$} & \gate{h^{2^{v_1-1}}\bmod p} & \gate{h^{2^{v_1-2}}\bmod p} & \cdots & \gate{h^{2^1}\bmod p} & \gate{h^{2^0}\bmod p} & \gate{g^{2^{v_2-1}}\bmod p} & \gate{g^{2^{v_2-2}}\bmod p} & \cdots & \gate{g^{2^1}\bmod p} & \gate{g^{2^0}\bmod p}
\end{quantikz}
};
\end{tikzpicture}

%\end{adjustbox}
\begin{tikzpicture}
\node[scale=0.6] {
\begin{quantikz}
    & \gate{H} & \ctrl{1}   & \cdots & \ctrl{3}         & \ctrl{4}       & \qw      & \cdots & \qw              & \qw              & \cdots & \qw      & \qw        & \qw      & \meter{$m_0$} \\
    & \qw      & \gate{R_2} & \cdots & \qw              & \qw            & \gate{H} & \cdots & \ctrl{2}         & \ctrl{3}         & \cdots & \qw      & \qw        & \qw      & \meter{$m_1$} \\
    & \vdots \\
    & \qw      & \qw        & \cdots & \gate{R_{v_1-1}} & \qw            & \qw      & \cdots & \gate{R_{v_1-2}} & \qw              & \cdots & \gate{H} & \ctrl{1}   & \qw      & \meter{$m_{v_1-2}$} \\
    & \qw      & \qw        & \cdots & \qw              & \gate{R_{v_1}} & \qw      & \cdots & \qw              & \gate{R_{v_1-1}} & \cdots & \qw      & \gate{R_2} & \gate{H} & \meter{$m_{v_1-1}$} \\
    & \gate{H} & \ctrl{1}   & \cdots & \ctrl{3}         & \ctrl{4}       & \qw      & \cdots & \qw              & \qw              & \cdots & \qw      & \qw        & \qw      & \meter{$m_0'$} \\
    & \qw      & \gate{R_2} & \cdots & \qw              & \qw            & \gate{H} & \cdots & \ctrl{2}         & \ctrl{3}         & \cdots & \qw      & \qw        & \qw      & \meter{$m_1'$} \\
    & \vdots \\
    & \qw      & \qw        & \cdots & \gate{R_{v_2-1}} & \qw            & \qw      & \cdots & \gate{R_{v_2-2}} & \qw              & \cdots & \gate{H} & \ctrl{1}   & \qw      & \meter{$m_{v_2-2}'$} \\
    & \qw      & \qw        & \cdots & \qw              & \gate{R_{v_2}} & \qw      & \cdots & \qw              & \gate{R_{v_2-1}} & \cdots & \qw      & \gate{R_2} & \gate{H} & \meter{$m_{v_2-1}'$} \\
    & \qw      & \qw        & \cdots & \qw              & \qw            & \qw      & \cdots & \qw              & \qw              & \cdots & \qw      & \qw        & \qw      & \qw
\end{quantikz}
};
\end{tikzpicture}
\caption{
%離散対数問題を解くShorアルゴリズムの量子回路。ただし、
Quantum circuit for solving the DLP using Shor's algorithm. Let
$R_j=\begin{bmatrix}
1 & 0 \\
0 & \exp(-2\pi i/2^j)
\end{bmatrix}$.
%上側に載っている前半部分は冪剰余計算、下側に載っている後半部分は逆量子フーリエ変換である。
The first part of the circuit on the top is for modular exponentiation, and the second part of the circuit on the bottom is for the inverse quantum Fourier transform.
}
\label{fig:circuit-standard-dlp}
\end{figure*}
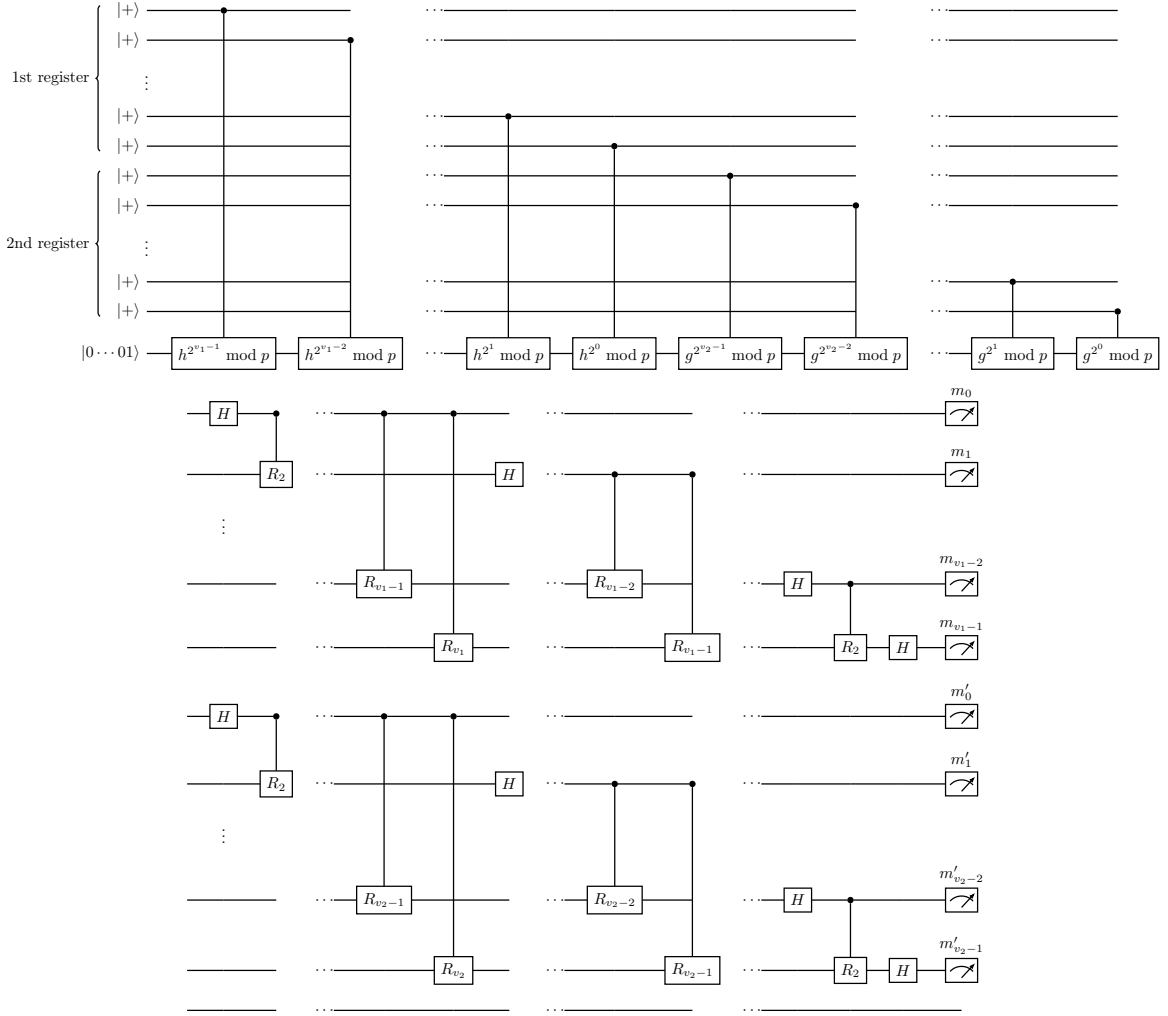

%既存研究\cite{martin2019revisitingshor}にて、uniform superpositionにより半古典逆量子フーリエ変換を利用して第1第2レジスタの量子ビットをrecyclingすることができると述べられている。
In \cite{martin2019revisitingshor}, it is stated that the qubits of the first and second registers can be recycled using the semi-classical quantum Fourier transform with a uniform superposition.
%しかしながらその量子回路は示されていないので、ここで明示しておくと、図\ref{fig:circuit-1qubit-dlp}のようになる。
Since the quantum circuit is not shown, it is explicitly illustrated here as Figure~\ref{fig:circuit-1qubit-dlp}.

\begin{figure*}
\centering
\begin{tikzpicture}
\node[scale=0.6] {
\begin{quantikz}[thin lines]
    \lstick{$\ket{+}$}          & \ctrl{1}                    & \gate{H}   & \meter{$m_0$} & \gate{X^{m_0}} & \gate{H} & \ctrl{1}                    & \gate{S_2} & \gate{H} & \meter{$m_1$} & \gate{X^{m_1}} & \gate{H} & \cdots & \ctrl{1}              & \gate{S_{v_1}} & \gate{H} & \meter{$m_{v_1-1}$} \\
    \lstick{$\ket{0\cdots 01}$} & \gate{h^{2^{v_1-1}}\bmod p} & \qw        & \qw           & \qw            & \qw      & \gate{h^{2^{v_1-2}}\bmod p} & \qw        & \qw      & \qw           & \qw            & \qw      & \cdots & \gate{h^{2^0}\bmod p} & \qw            & \qw      & \qw
\end{quantikz}
};
\end{tikzpicture}

\begin{tikzpicture}
\node[scale=0.6] {
\begin{quantikz}[thin lines]
    \gate{X^{m_{v_1-1}}} & \gate{H} & \ctrl{1}                    & \gate{H}   & \meter{$m_0'$} & \gate{X^{m_0'}} & \gate{H} & \ctrl{1}                    & \gate{S_2'} & \gate{H} & \meter{$m_1'$} & \gate{X^{m_1'}} & \gate{H} & \cdots & \ctrl{1}              & \gate{S_{v_2}'} & \gate{H} & \meter{$m_{v_2-1}'$} \\
    \qw                  & \qw      & \gate{g^{2^{v_2-1}}\bmod p} & \qw        & \qw            & \qw             & \qw      & \gate{g^{2^{v_2-2}}\bmod p} & \qw         & \qw      & \qw            & \qw             & \qw      & \cdots & \gate{g^{2^0}\bmod p} & \qw             & \qw      & \qw
\end{quantikz}
};
\end{tikzpicture}
\caption{
%半古典逆量子フーリエ変換を利用して制御レジスタの量子ビット数を1にしたときの、離散対数問題を解くShorアルゴリズムの量子回路。ただし、
Quantum circuit for solving the DLP using Shor's algorithm when the number of qubits in the control registers is set to 1 by employing the semi-classical inverse quantum Fourier transform. Let
$S_j=\begin{bmatrix}
1 & 0 \\
0 & \exp(-2\pi i\sum_{l=2}^{j} m_{j-l}/2^l)
\end{bmatrix}$ and $S_j'=\begin{bmatrix}
1 & 0 \\
0 & \exp(-2\pi i\sum_{l=2}^{j} m_{j-l}'/2^l)
\end{bmatrix}$.}
\label{fig:circuit-1qubit-dlp}
\end{figure*}
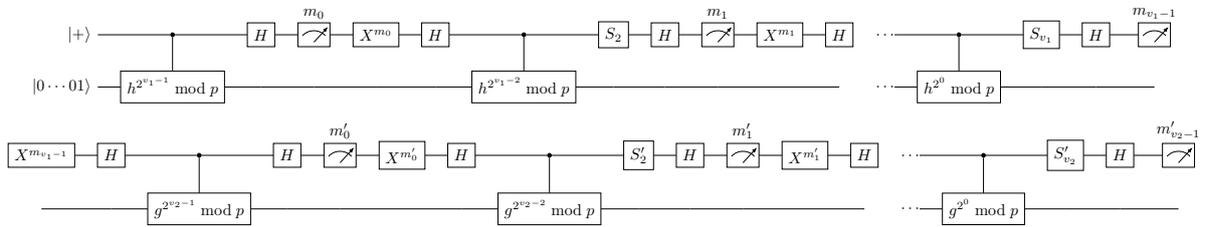

%図\ref{fig:circuit-1qubit-dlp}の方法は制御レジスタの量子ビット数を最小の1にできるが、図\ref{fig:circuit-2qubits-dlp}のように2つにすることで点線で囲われた部分を同時に実行して量子回路の深さを削減することができる。
While the existing method in Figure~\ref{fig:circuit-1qubit-dlp} can minimize the number of qubits in the control register to one, increasing it to two as in Figure~\ref{fig:circuit-2qubits-dlp} allows the parts enclosed by the dashed lines to be executed simultaneously, reducing the time depth of the quantum circuit.
%その最適性についてAppendixにて示している。
The optimality of this is proved in the Appendix.
%\textcolor{red}{（TODO:2つが最適であることの証明をAppendixに置く）}
%さらに$m_0$, ..., $m_0'$, ... を得る論理測定の時間がMod-MULよりも長い場合により高速になる方法もAppendixにて示している。
Additionally, a faster method is shown in the Appendix for cases where the measurement time to obtain $m_0$, ..., $m_0'$, ... is longer than each part of modular exponentiation.
%今回の量子シミュレータ上での実装では、使用したシミュレータ\cite{qulacs,mpiQulacs}の都合上、半古典逆量子フーリエ変換を利用したこれらの方法は利用していない。
In the implementation on the quantum simulator~\cite{qulacs,mpiQulacs} used in this study, these methods utilizing the semi-classical inverse quantum Fourier transform were not employed due to the constraints of the simulator.
%しかし、実際の量子計算機上の実装ではこれらを離散対数問題に限らず素因数分解のShorアルゴリズム全般で活用できる。
However, in actual quantum computer implementations, these methods can be applied not only to the DLP but also to Shor's algorithm for integer factorization in general.

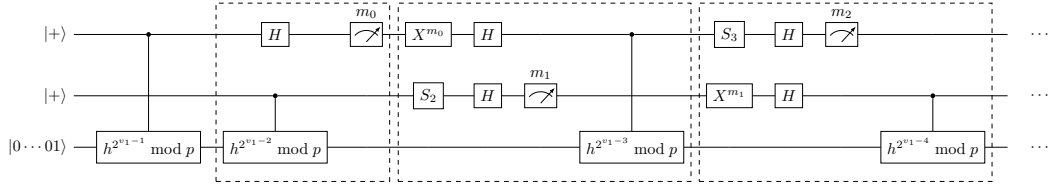
\begin{figure*}
\centering
\begin{tikzpicture}
\node[scale=0.6] {
\begin{quantikz}[thin lines]
    \lstick{$\ket{+}$}          & \ctrl{2}                    & \gate{H}\gategroup[3,steps=2,style={dashed, inner xsep=1pt, inner ysep=8pt}]{} & \meter{$m_0$} & \gate{X^{m_0}}\gategroup[3,steps=4,style={dashed, inner xsep=1pt, inner ysep=8pt}]{} & \gate{H} & \qw           & \ctrl{2}                    & \gate{S_3}\gategroup[3,steps=4,style={dashed, inner xsep=1pt, inner ysep=8pt}]{} & \gate{H} & \meter{$m_2$} & \qw                         & \qw & \cdots \\
    \lstick{$\ket{+}$}          & \qw                         & \ctrl{1}                                                                       & \qw           & \gate{S_2}                                                                           & \gate{H} & \meter{$m_1$} & \qw                         & \gate{X^{m_1}}                                                                   & \gate{H} & \qw           & \ctrl{1}                    & \qw & \cdots \\
    \lstick{$\ket{0\cdots 01}$} & \gate{h^{2^{v_1-1}}\bmod p} & \gate{h^{2^{v_1-2}}\bmod p}                                                    & \qw           & \qw                                                                                  & \qw      & \qw           & \gate{h^{2^{v_1-3}}\bmod p} & \qw                                                                              & \qw      & \qw           & \gate{h^{2^{v_1-4}}\bmod p} & \qw & \cdots
\end{quantikz}
};
\end{tikzpicture}
\caption{
%半古典逆量子フーリエ変換を利用し、かつ量子回路の深さを削減するために制御レジスタの量子ビット数を2にしたときの、離散対数問題を解くShorアルゴリズムの量子回路。点線で囲われた部分は並列に実行可能
Quantum circuit for solving the DLP using Shor's algorithm when the number of qubits in the control registers is set to 2 by employing the semi-classical inverse quantum Fourier transform and reducing the time depth of the quantum circuit.
The parts enclosed by dashed lines can be executed in parallel.
}
\label{fig:circuit-2qubits-dlp}
\end{figure*}

%しかし、今回は実装に用いたqulacs\cite{qulacs}の都合上その方式はとっていない。
%\textcolor{red}{（とはいえ、書くべきでは？その恩恵として、特許手法もここに盛り込めるのが大きい）}

\subsection{The classical computation part}

%量子計算パートで得られた$\widetilde{sl/q}$, $\widetilde{l/q}$から$s$を求めるのがこのパートの目的である．
The purpose of this part is to determine $s$ from the $\widetilde{sl/q}$ and $\widetilde{l/q}$ obtained in the quantum computation part.
%はじめに$\widetilde{l/q}$から$l\in\mathbb{Z}$を求める．
First, we determine $l\in\mathbb{Z}$ from $\widetilde{l/q}$.
%これは$2^{v_2}>q$であれば，$l=\lceil\widetilde{l/q}\cdot q\rfloor$により得られる．
If $2^{v_2}>q$, $l$ can be obtained by $l=\lceil\widetilde{l/q}\cdot q\rfloor$.
%%これは$v_2\geq\lfloor\log q\rfloor$であれば，$l=\lceil\widetilde{l/q}\cdot q\rfloor$により得られる．
%同様に，$\widetilde{sl/q}$から$2^{v_1}>q$であれば$\beta=\lceil\widetilde{sl/q}\cdot q\rfloor$と求まる．
Similarly, if $2^{v_1}>q$, $\beta$ can be determined from $\beta=\lceil\widetilde{sl/q}\cdot q\rfloor$.
%%また，$\widetilde{sl/q}$から同様に，$v_1\geq\lfloor\log q\rfloor$であれば$\beta=\lceil\widetilde{sl/q}\cdot q\rfloor$と求まる．

%ここまでで$l$, $\beta$が求まっているため，残りの$s$, $\alpha$と合わせて式\eqref{eq:tilde-sl-q}より，次の不定方程式が得られる．
Since $l$ and $\beta$ have been determined up to this point, combining them with $s$ and $\alpha$, the following indeterminate equation is obtained from \eqref{eq:tilde-sl-q}:
\begin{align}
    ls-q\alpha=\beta.
\end{align}
%ここで，$q$は素数であるため，両辺の$\mod q$をとると，
Here, since $q$ is a prime number and known, taking both sides modulo $q$ gives:
\begin{align}
    ls&=\beta\pmod q \\
    s&=\frac{\beta}{l}\pmod q.
\end{align}
%より，$s$は求まる．
Thus, $s$ can be determined.

%従来の研究においては連分数展開を用いていた\cite{martin2019revisitingshor}が、この四捨五入を使う方法でそれよりも高速に求めることができる。
In previous research, continued fraction expansion was used~\cite{martin2019revisitingshor}, but this rounding method allows for a slightly faster determination.

\subsection{Space and Time Complexity}

%量子ビット数で表される空間計算量について，古典計算パートでの$l$, $\beta$を求める過程から第1，第2レジスタの量子ビット数はともに
Regarding the space complexity represented by the number of qubits, the number of qubits in the first and second registers, derived from the process of determining $l$ and $\beta$ in the classical computation part, is given by
\begin{align}
    v_1=v_2=\lfloor\log q\rfloor+1\label{eq:1st-2nd-registers}.
\end{align}
%である．
%この結果は素因数分解の場合と大きく異なっている．
This result is significantly different from the case of factorization.
%合成数$N$を素因数分解するShorのアルゴリズムでは，第1レジスタが持つ値の分母である位数$q$が未知である．
In Shor's algorithm for factoring a composite number $N$, the order $q$, which is the denominator of the value held by the first register after the inverse quantum Fourier transform, is unknown.
%ゆえに，その測定結果から真の有理数を精度良く推定するには第1レジスタの量子ビット数を$2\lceil\log N\rceil+1$用意し，そこから連分数展開アルゴリズムを実行する必要がある\cite{qcqi}．
Therefore, to accurately estimate the true rational number from the measurement result, the first register requires $2\lceil\log N\rceil+1$ qubits, and the continued fraction expansion algorithm must be executed from there~\cite{qcqi}.
%したがって，離散対数問題の場合には第1，第2レジスタの必要とする量子ビット数が式\eqref{eq:1st-2nd-registers}であり，かつ推定に連分数展開ではなく$q$を掛けたうえで四捨五入および不定方程式を用いるということが素因数分解と異なっている．
Consequently, in the case of the DLP, the number of qubits required for the first and second registers is given by \eqref{eq:1st-2nd-registers}, and the estimation uses rounding and an indeterminate equation after multiplying by $q$ instead of continued fraction expansion, which is different from factorization.
%なお、第1レジスタに余分な量子ビットを加えて成功確率を99\%以上にする方法もある\cite{martin2019revisitingshor}が、今回は量子ビット数が増えると指数関数的に負荷のかかる古典計算機上で量子シミュレータを動かしていること、および近い将来の実機実験で採用される規模を想定しているので、その考慮はしない。
Note that there is a method to add extra qubits to the first register to achieve a success probability of over 99\%~\cite{martin2019revisitingshor}, but this consideration is omitted here due to the exponential load increase on classical computers running quantum simulators as the number of qubits increases.%, and the assumption of the scale to be adopted in near-future real quantum machine experiments.

%第3レジスタは$p$未満の値まで格納できれば良いので，その量子ビット数は
The third register only needs to store values less than $p$, so the number of qubits required is
\begin{align}
    w=\lceil\log p\rceil.
\end{align}
%となる．
%また，冪剰余計算のための補助ビットを$A_\text{space}$とする．
Additionally, let $A_\text{space}$ be the auxiliary qubits for modular exponentiation.
%%これは冪剰余にてどの加算手法を採用するかで異なる．
%これは冪剰余にてどの加算手法を採用するかで異なり、具体的な形については次節で議論する。
This varies depending on the addition method adopted in modular exponentiation, and the specific form will be discussed in the next section.
%%また，オラクル計算のための補助ビットが$\lceil\log p\rceil+2$量子ビット必要となる\cite{qadd}．
%よって，空間計算量は
Therefore, the space complexity is
\begin{align}
    %v_1+v_2+w+(\lceil\log p\rceil+2)=2\lceil\log p\rceil+2\lfloor\log q\rfloor+4\label{eq:space-complexity}
    v_1+v_2+w+A_\text{space}=\lceil\log p\rceil+2\lfloor\log q\rfloor+2+A_\text{space}.\label{eq:general-space-complexity}
\end{align}
%である．
%もっとも、Quantum Partで示したように、半古典逆量子フーリエ変換を用いる場合は第1・第2レジスタをまとめて1量子ビットもしくは2量子ビットにできるので、
However, as shown in the Quantum Computation Part, if the semi-classical inverse quantum Fourier transform is used, the first and second registers can be combined into 1 or 2 qubits, so
\begin{align}
    1+w+A_\text{space}=\lceil\log p\rceil+1+A_\text{space}\label{eq:general-1qubit-space-complexity}
\end{align}
%もしくは
or
\begin{align}
    2+w+A_\text{space}=\lceil\log p\rceil+2+A_\text{space}.\label{eq:general-2qubit-space-complexity}
\end{align}
%となる。

%次に，ゲート数で表される時間計算量について記述する．
Next, we describe the time complexity represented by the number of gates.
%冪剰余の計算は第1，第2レジスタそれぞれで実行する．
The modular exponentiation is executed with the first and second registers as control. 
%%この冪剰余計算の時間計算量を$A_\text{time}$とすると，
%%これまでに知られている手法については$m,n\in\mathbb{N}$について，$A_\text{time}=O(\lceil\log p\rceil^m \lfloor\log q\rfloor^n)$と表される\cite{kunihiro}．
%素因数分解においては冪剰余計算の時間計算量は、これまでに知られているどの手法でも素因数分解対象の値のビット数の多項式オーダで表される\cite{kunihiro}。
In factorization, the time complexity of modular exponentiation is expressed in polynomial order of the bit length of the value to be factored by any known method~\cite{kunihiro}.
%そこから、離散対数問題においても同様に、その時間計算量を$A_\text{time}$と表すと、$A_\text{time}=O(\lceil\log p\rceil^m \lfloor\log q\rfloor^n)$と書ける。
From this, similarly in the DLP, when the time complexity of modular exponentiation is expressed as $A_\text{time}$, it can be written as $A_\text{time}=O(\lceil\log p\rceil^m \lfloor\log q\rfloor^n)$.
%空間計算量と同様に、各加算手法を採用した際の $m$, $n$ については次節で議論する。
As with space complexity, the values of $m$ and $n$ for each addition method will be discussed in the next section.
%%そこでは\cite{qadd}のcontrolled-$U_a$をそのビット数$\lfloor\log q\rfloor+1$の回数だけ実行することになり，
%そして，現在知られているいずれの加算手法を利用しようとも，冪剰余計算の時間計算量がアルゴリズム全体で最も支配的である\cite{kunihiro,Gidney2021howtofactorbit}ので，全体の時間計算量は
Since the time complexity of modular exponentiation is the most dominant in the entire algorithm regardless of the addition method used~\cite{kunihiro,Gidney2021howtofactorbit}, the overall time complexity is
\begin{align}
    %O(\lceil\log p\rceil^3\lfloor\log q\rfloor)\label{eq:time-complexity}
    A_\text{time}=O(\lceil\log p\rceil^m \lfloor\log q\rfloor^n).\label{eq:general-time-complexity}
\end{align}
%である．

\section{Construction of Quantum Circuit}

%\textcolor{red}{（どこまでR-ADD, Q-ADDを細かく書くか悩んでいる）}
%
%Shorアルゴリズムの量子回路のうち最も計算量の大きいプロセスである冪剰余計算は、Mod-EXP回路により実現される\cite{kunihiro}。
Among the quantum circuits of Shor's algorithm, the modular exponentiation, which is the most computationally intensive process, is realized by the Mod-EXP circuit~\cite{kunihiro}.
%そのうちの加算器の実装方法により、必要な量子ビット数およびゲート数が変化する。
The number of required qubits and gates varies depending on the implementation method of the adder.
%加算器の実装方法としてはQ-ADD, GT-ADD, R-ADDの3種類が存在し、Q-ADDとGT-ADDは必要な量子ビット数が少ない代わりにゲート数が増大し、
%一方、R-ADDは必要な量子ビット数が多い代わりにゲート数が少なく済む\cite{kunihiro}。
There are three types of adder implementations: Q-ADD, GT-ADD, and R-ADD.
Q-ADD and GT-ADD require fewer qubits but result in an increased number of gates, whereas R-ADD requires more qubits but fewer gates~\cite{kunihiro}.
%量子シミュレータ上では量子ビット数が増えるほど指数関数的に必要なメモリが増大して実行が困難になることから、今回は量子シミュレータ上で動かす量子回路の作成にはQ-ADDを利用した。
On a quantum simulator, as the number of qubits increased, the required memory grows exponentially, making execution difficult.
Therefore, for creating quantum circuits to run on a quantum simulator, Q-ADD was used.
%そして、2048ビット規模でのリソース見積もりにはゲート数を少なく抑えて実行時間を短くできるR-ADDを利用した。
For resource estimation at the 2048 bit scale, R-ADD, which can reduce the number of gates and shorten execution time, was used.

%Q-ADDとR-ADDのどちらをベースにした回路であっても、\cite{kunihiro,yamaguchi,yamaguchi2023experimentsresource}と同様に、Mod-EXPはMod-MUL, Mod-PS, Mod-ADD, and ADDから構成される：
Regardless of whether the circuit is based on Q-ADD or R-ADD, as in \cite{kunihiro,yamaguchi,yamaguchi2023experimentsresource}, Mod-EXP consists of Mod-MUL, Mod-PS, Mod-ADD, and ADD:
\begin{itemize}
    \item Mod-EXP$(a)$: $\ket{x}\ket{1}\to \ket{x}\ket{a^x\bmod p}$
    \item Mod-MUL$(d)$: $\ket{y}\to \ket{dy\bmod p}$
    \item Mod-PS$(d)$: $\ket{y}\ket{t}\to \ket{y}\ket{t+dy\bmod p}$
    \item Mod-ADD$(d)$: $\ket{y}\to \ket{y+d\bmod p}$
    \item ADD$(d)$: $\ket{y}\to \ket{y+d}$
\end{itemize}
%%一方、素因数分解ではなく離散対数問題を解くためのShorアルゴリズムの量子回路を構築するので、必要な量子ビット数やゲート数に変化が生じる。
%Constructing a quantum circuit for Shor's algorithm to solve the DLP, rather than factorization, results in changes in the number of required qubits and gates.
%%この節の残りでは、それらが離散対数問題ではどのようなオーダーをとるかを論じる。
%The remainder of this section discusses the order of these requirements for the DLP.
%ここでは素因数分解ではなくDLP向けの量子回路を構築するが、その構成方法自体はほとんど共通している。
Here, we construct quantum circuits for DLP rather than for factorization, but the construction method itself is mostly common.
%とはいえ、素因数分解と異なり変数が$p$と$q$の2つあるので、前節で導入した$A_\textrm{space}$と$A_\textrm{time}$がどのように記述されるかをQ-ADD based circuitとR-ADD based circuitそれぞれについて説明する。
However, unlike factorization, there are two variables, $p$ and $q$, so we will explain how $A_\textrm{space}$ and $A_\textrm{time}$ introduced in the previous section are described for both Q-ADD based circuits and R-ADD based circuits.
\subsection{Q-Add Based Circuits}

%Q-ADDは量子フーリエ変換を加算器に用いる方法である\cite{qadd, thomas2000addition}。
Q-ADD is a method that uses quantum Fourier transform as an adder~\cite{qadd, thomas2000addition}.
%Q-ADDを用いた冪剰余計算回路の構築方法は\cite{yamaguchi,yamaguchi2023experimentsresource}と同一である。
The construction method for modular exponentiation circuits using Q-ADD is the same as in \cite{yamaguchi,yamaguchi2023experimentsresource}.
%すると、Mod-ADDを実現する際に標数 $p$ の補数の範囲の補助ビット数および追加の1ビットが必要になる（詳細は\cite{qadd}のFig.5参照）ので、
Therefore, when implementing Mod-ADD, additional bits for the range of the complement of the modulo $p$ and an extra bit are required (see Figure~5 in \cite{qadd} for details), resulting in
\begin{align}
    A_\text{space}=(\lceil \log p\rceil+1)+1=\lceil \log p\rceil+2.\label{eq:qadd-space-complexity}
\end{align}
%となる。
%ゆえに、全体として必要な量子ビット数は式\eqref{eq:general-space-complexity}より、
%%今回は離散対数問題を扱うので、全体として必要な量子ビット数は
%\begin{align}
%    2\lceil\log p\rceil+2\lfloor\log q\rfloor+4\label{eq:space-complexity}
%\end{align}
%となる。
%一方、時間計算量については\cite{qadd,yamaguchi,yamaguchi2023experimentsresource}と同様に、標数 $p$ の範囲でのMod-MULを一度実行するのに $O(\lceil\log p\rceil^3)$ かかるが、
%その実行回数は $v_1+v_2=2\lceil\log q\rceil+2=O(\lceil\log q\rceil)$ なので、全体の時間計算量は式\eqref{eq:general-time-complexity}より
Regarding time complexity, as in \cite{qadd,yamaguchi,yamaguchi2023experimentsresource}, it takes $O(\lceil\log p\rceil^3)$ to execute Mod-MUL once within the range of the modulo $p$,
and since the number of executions is $v_1+v_2=2\lceil\log q\rceil+2=O(\lceil\log q\rceil)$, the overall time complexity is given by \eqref{eq:general-time-complexity} as
%ゲート数は
\begin{align}
    O(\lceil\log p\rceil^3 \lfloor\log q\rfloor).\label{eq:qadd-time-complexity}
\end{align}
%となる。
%
%
%
\subsection{R-Add Based Circuits}

%R-ADD is a ripple carry adderである\cite{radd}。
R-ADD is a ripple carry adder~\cite{radd}.
%表面符号上でShor algorithmのリソース見積もりをした先行研究でもこの加算手法をベースとしていた\cite{Gidney2021howtofactorbit}。
Previous studies estimating the resources of Shor's algorithm on surface codes also based their addition method on this~\cite{Gidney2021howtofactorbit}.
%As in Q-ADD, R-ADDを用いた冪剰余計算回路の構築方法は\cite{yamaguchi,yamaguchi2023experimentsresource,kunihiro}と同様であり、
%Q-ADDでも必要であった量子ビット数に加えて$\lceil\log p\rceil$個の量子ビット数が必要なため、
As in Q-ADD, the construction method of the modular exponentiation circuit using R-ADD is similar to \cite{yamaguchi,yamaguchi2023experimentsresource,kunihiro}, and in addition to the number of qubits required for Q-ADD, $\lceil\log p\rceil$ qubits are needed, resulting in
\begin{align}
    A_\text{space}=(\lceil \log p\rceil+2)+\lceil\log p\rceil=2\lceil\log p\rceil+2.\label{eq:radd-space-complexity}
\end{align}
%となる。
%%一方、ゲート数に関わる部分では素因数分解と異なる点が存在する。
%一方、ゲート数に関わる部分では素因数分解と異なる点が存在するが、計算量オーダーには影響しないのでAppendixにて記す。
While there are differences in the part related to the number of gates compared to factorization, but they do not affect the order of time complexity, so they are described in the Appendix.
%すると、\cite{kunihiro,yamaguchi,yamaguchi2023experimentsresource}と同様に、標数 $p$ の範囲でのMod-MULを一度実行するのに $O(\lceil\log p\rceil^2)$ かかり、
%それが $v_1+v_2=O(\lceil\log q\rceil)$ 回おこなわれるので、全体の時間計算量は
Thus, as in \cite{kunihiro,yamaguchi,yamaguchi2023experimentsresource}, it takes $O(\lceil\log p\rceil^2)$ to execute Mod-MUL once in the range of modulo $p$, and since this is done $v_1+v_2=O(\lceil\log q\rceil)$ times, the overall time complexity is
%%もっとも、計算量オーダーには影響せず、\cite{kunihiro,yamaguchi,yamaguchi2023experimentsresource}と同様に、標数 $p$ の範囲でのMod-MULを一度実行するのに $O(\lceil\log p\rceil^2)$ かかり、
\begin{align}
    O(\lceil\log p\rceil^2 \lfloor\log q\rfloor).\label{eq:radd-time-complexity}
\end{align}
%となる。

%\begin{enumerate}
%    \item dirty, clean量子ビットの扱いまわり
%    \begin{enumerate}
%        \item factoringと異なり、レジスタが2つあるので、それらを利用したclean量子ビットを使える
%    \end{enumerate}
%\end{enumerate}

\section{Simulating Quantum Circuits}

%\subsubsection{Simulating Shor Algorithm Circuits}

%離散対数問題を解くShorのアルゴリズムを量子シミュレータ上で実行した結果を示す．
The results of executing Shor's algorithm for solving the DLP on a quantum simulator are presented.
%生成した回路の実行を伴う場合は前述したように必要な量子ビット数の少ないQ-ADDを利用し、
When executing the generated circuits, we utilized Q-ADD, which requires fewer qubits as previously mentioned,
and employed mpiQulacs~1.3.1~\cite{mpiQulacs}, a state vector simulator developed by Fujitsu.
We used a PRIMEHPC FX700 with one node for up to 28 qubits, four nodes for 29-32 qubits, and up to 256 nodes for 36 qubits.
%富士通製の状態ベクトルシミュレータであるmpiQulacs 1.3.1~\cite{mpiQulacs}を利用し，PRIMEHPC FX700を28量子ビットまでは1台，29-32量子ビットでは4台，36量子ビットでは256台用いた．
Additionally, after generating the quantum circuits, we applied \lstinline$optimize_light$ from mpiQulacs as an optimization step.
%また，量子回路を生成した後に最適化として，mpiQulacsの\lstinline$optimize_light$を施している．
This optimization not only includes the greedy method from Qulacs~\cite{qulacs} but also reduces inter-process communication by fused-swap gates~\cite{mpiQulacs}.
%これはQulacs~\cite{qulacs}の貪欲法による最適化に加えて，fused-swapゲートを利用したプロセス間通信量の削減を実現している\cite{mpiQulacs}．

%一方、リソース見積もりのための回路生成時には、必要なゲート数が少なく済むR-ADDを利用している。
On the other hand, for circuit generation aimed at resource estimation, we utilized R-ADD, which requires fewer gates.
%そして、最適化には通常のQulacs~\cite{qulacs}の\lstinline$optimize_light$を用いている。
For optimization, we used the standard \lstinline$optimize_light$ from Qulacs~\cite{qulacs}.

\subsection{Running Generated Circuits}

%\textcolor{red}{（パターン3については削除予定。成功確率についてはこの形でいく予定）}

%32量子ビットまでの
We executed quantum circuit simulations using Q-ADD for all 1,860 combinations of prime numbers $p$ and $q$ that satisfy
\begin{align}
    q|(p-1)
\end{align}
%を満たすすべての素数$p$, $q$の組合せ1860通りについて，Q-ADDを利用してその量子回路シミュレーションを実行した．
up to 32 qubits.
%この全組合せの一覧を表\ref{tab:pq-all-combinations}に示す．
The complete list of these combinations is shown in Table~\ref{tab:pq-all-combinations}.
%シミュレーションの結果，すべての組合せにて正しく解$s$は求められた．
The simulation results confirmed that the correct solution $s$ was obtained for all combinations.

%\begin{enumerate}
%    \item 32量子ビットまでの1860通りのすべての $p,q$ の組合せ（表\ref{tab:pq-all-combinations}）についてQ-ADDを利用して網羅的に実験
%    \item すべての組合せで正しく解 $s$ は求められた
%    \item 各組合せについて10000サンプル実験した際の成功確率は概ね40\%から90\%に分布していた。特に $q$ の値が大きいほど60\%以上のもののみになっていった\textcolor{red}{（\cite{martin2019revisitingshor}の成功確率解析と合わせて後で考察）}
%    \item より大きな規模の問題として36量子ビットのパターン1,3について実験し、いずれも正しく解が求まった
%\end{enumerate}

\begin{table*}[tb]\tiny
    \caption{
        %32量子ビットまでの1860通りの可能なすべての$p$と$q$の組合せ．各行は対象の$p$の最小値・最大値の範囲内における，$q$のそれぞれの値の個数を表している．例えば$p\in[300,600]$における$q=5$となる$p$, $q$の組合せの個数は10個である．また，--は32量子ビットまでで表現することのできない$p$, $q$の組合せを表す．
        All 1,860 possible combinations of $p$ and $q$ up to 32 qubits.
        Each row represents the number of values $q$ within the minimum and maximum range of the target $p$.
        For example, the number of combinations of $p$ and $q$ where $q=5$ and $p\in[300,600]$ is 10.
        Additionally, -- indicates combinations of $p$ and $q$ that cannot be represented within 32 qubits.
        %パターン1，3の個数は，その行の$p$の範囲におけるパターン1，3に該当する$p$, $q$の組合せの個数を表す．
        %\#same-prime groupは，その行の$p$の範囲におけるsame-prime groupに該当する$p$, $q$の組合せの個数を表す．
        The \#same-prime group represents the number of combinations of $p$ and $q$ that fall into the same-prime group within the range of $p$ in that row.
    }
    \centering
    %\tabrowsep = 1pt
    %\renewcommand{\arraystretch}{0.2}
    \begin{tabular}{|ll|llllllllllllllllllll|l|l|l|}
\hline
%\multicolumn{2}{|c|}{$p$} & \multicolumn{20}{c|}{$q$の個数} & \multirow{2}{5em}{パターン1の個数} & \multirow{2}{5em}{パターン3の個数} & \multirow{2}{*}{累計} \\
\multicolumn{2}{|c|}{$p$} & \multicolumn{20}{c|}{\#$q$} & \multirow{2}{6em}{\#safe-prime group} & \multirow{2}{*}{total} \\
min & max & 2 & 3 & 5 & 7 & 11 & 13 & 17 & 19 & 23 & 29 & 31 & 37 & 41 & 43 & 47 & 53 & 61 & 83 & 89 & 113 & & \\ \hline
0 & 300 & 62 & 28 & 15 & 9 & 4 & 4 & 3 & 2 & 3 & 2 & 0 & 2 & 1 & 1 & 1 & 1 & 0 & 1 & 1 & 1 & 11 & 141 \\ \hline
300 & 600 & 47 & 22 & 10 & 7 & 5 & 5 & 3 & 3 & 2 & 2 & 2 & 0 & 0 & 1 & 0 & 0 & 1 & -- & -- & -- & 0 & 251 \\ \hline
600 & 900 & 45 & 24 & 11 & 9 & 6 & 2 & 2 & 2 & 2 & 0 & 1 & -- & -- & -- & -- & -- & -- & -- & -- & -- & 0 & 355 \\ \hline
900 & 1200 & 42 & 20 & 12 & 7 & 4 & 4 & 3 & 0 & 2 & 1 & 0 & -- & -- & -- & -- & -- & -- & -- & -- & -- & 0 & 450 \\ \hline
1200 & 1500 & 43 & 21 & 10 & 6 & 4 & 5 & -- & -- & -- & -- & -- & -- & -- & -- & -- & -- & -- & -- & -- & -- & 0 & 539 \\ \hline
1500 & 1800 & 39 & 22 & 7 & 5 & 2 & 1 & -- & -- & -- & -- & -- & -- & -- & -- & -- & -- & -- & -- & -- & -- & 0 & 615 \\ \hline
1800 & 2100 & 39 & 17 & 10 & 5 & 2 & 5 & -- & -- & -- & -- & -- & -- & -- & -- & -- & -- & -- & -- & -- & -- & 0 & 693 \\ \hline
2100 & 2400 & 40 & 21 & 12 & 8 & -- & -- & -- & -- & -- & -- & -- & -- & -- & -- & -- & -- & -- & -- & -- & -- & 0 & 774 \\ \hline
2400 & 2700 & 36 & 16 & 8 & 8 & -- & -- & -- & -- & -- & -- & -- & -- & -- & -- & -- & -- & -- & -- & -- & -- & 0 & 842 \\ \hline
2700 & 3000 & 37 & 16 & 8 & 6 & -- & -- & -- & -- & -- & -- & -- & -- & -- & -- & -- & -- & -- & -- & -- & -- & 0 & 909 \\ \hline
3000 & 3300 & 33 & 18 & 10 & 5 & -- & -- & -- & -- & -- & -- & -- & -- & -- & -- & -- & -- & -- & -- & -- & -- & 0 & 975 \\ \hline
3300 & 3600 & 40 & 22 & 11 & 7 & -- & -- & -- & -- & -- & -- & -- & -- & -- & -- & -- & -- & -- & -- & -- & -- & 0 & 1055 \\ \hline
3600 & 3900 & 36 & 19 & 8 & 6 & -- & -- & -- & -- & -- & -- & -- & -- & -- & -- & -- & -- & -- & -- & -- & -- & 0 & 1124 \\ \hline
3900 & 4200 & 35 & 17 & 6 & 2 & -- & -- & -- & -- & -- & -- & -- & -- & -- & -- & -- & -- & -- & -- & -- & -- & 0 & 1184 \\ \hline
4200 & 4500 & 36 & 15 & -- & -- & -- & -- & -- & -- & -- & -- & -- & -- & -- & -- & -- & -- & -- & -- & -- & -- & 0 & 1235 \\ \hline
4500 & 4800 & 36 & 19 & -- & -- & -- & -- & -- & -- & -- & -- & -- & -- & -- & -- & -- & -- & -- & -- & -- & -- & 0 & 1290 \\ \hline
4800 & 5100 & 35 & 17 & -- & -- & -- & -- & -- & -- & -- & -- & -- & -- & -- & -- & -- & -- & -- & -- & -- & -- & 0 & 1342 \\ \hline
5100 & 5400 & 31 & 13 & -- & -- & -- & -- & -- & -- & -- & -- & -- & -- & -- & -- & -- & -- & -- & -- & -- & -- & 0 & 1386 \\ \hline
5400 & 5700 & 38 & 22 & -- & -- & -- & -- & -- & -- & -- & -- & -- & -- & -- & -- & -- & -- & -- & -- & -- & -- & 0 & 1446 \\ \hline
5700 & 6000 & 33 & 15 & -- & -- & -- & -- & -- & -- & -- & -- & -- & -- & -- & -- & -- & -- & -- & -- & -- & -- & 0 & 1494 \\ \hline
6000 & 6300 & 36 & 18 & -- & -- & -- & -- & -- & -- & -- & -- & -- & -- & -- & -- & -- & -- & -- & -- & -- & -- & 0 & 1548 \\ \hline
6300 & 6600 & 34 & 18 & -- & -- & -- & -- & -- & -- & -- & -- & -- & -- & -- & -- & -- & -- & -- & -- & -- & -- & 0 & 1600 \\ \hline
6600 & 6900 & 34 & 18 & -- & -- & -- & -- & -- & -- & -- & -- & -- & -- & -- & -- & -- & -- & -- & -- & -- & -- & 0 & 1652 \\ \hline
6900 & 7200 & 32 & 13 & -- & -- & -- & -- & -- & -- & -- & -- & -- & -- & -- & -- & -- & -- & -- & -- & -- & -- & 0 & 1697 \\ \hline
7200 & 7500 & 31 & 17 & -- & -- & -- & -- & -- & -- & -- & -- & -- & -- & -- & -- & -- & -- & -- & -- & -- & -- & 0 & 1745 \\ \hline
7500 & 7800 & 37 & 19 & -- & -- & -- & -- & -- & -- & -- & -- & -- & -- & -- & -- & -- & -- & -- & -- & -- & -- & 0 & 1801 \\ \hline
7800 & 8100 & 31 & 13 & -- & -- & -- & -- & -- & -- & -- & -- & -- & -- & -- & -- & -- & -- & -- & -- & -- & -- & 0 & 1845 \\ \hline
8100 & 8400 & 10 & 5 & -- & -- & -- & -- & -- & -- & -- & -- & -- & -- & -- & -- & -- & -- & -- & -- & -- & -- & 0 & 1860 \\ \hline
\end{tabular}
    \label{tab:pq-all-combinations}
\end{table*}

%また、より大きな数での実験として、
%%パターン1，3について36量子ビットで表現できる規模の離散対数問題を量子シミュレータで解いた．
%Safe-prime groupについて36量子ビットで表現できる規模である
%$p=503$, $q=251$つまり$p$が9ビット，$q$が8ビットの離散対数問題
%を量子シミュレータで解いた。
Additionally, as an experiment with large numbers, we solved the DLP of a safe-prime group with $(p,q)=(263,131)$, $(347,173)$, $(359,179)$, $(383,191)$, $(467,233)$, $(479,239)$, $(503,251)$, which can be represented with 36 qubits, using a quantum simulator.
Here, $p$ is 9 bits and $q$ is 8 bits.
%%パターン1については$p=503$, $q=251$つまり$p$が9ビット，$q$が8ビットの離散対数問題，
%%パターン3については$p=4093$, $q=31$つまり$p$が12ビット，$q$が5ビットの離散対数問題
%%を用いた．
%%なお，パターン3の値が式\eqref{eq:qadd-pattern3}に従っていないように見えるが，これはまだ$p,q\gg 1$といえず，$\lceil\log p\rceil\not\approx \log p$および$\lfloor\log q\rfloor\not\approx \log q$となるためである．
%一方，Schnorr groupについては，数体篩法とBaby-step Giant-stepの計算量が一致するように$p$と$q$を決めると36量子ビットの規模では$p<q$となり破綻するため，今回の実験では用いていない．
On the other hand, for the Schnorr group, if $p$ and $q$ are chosen such that the time complexity of the NFS and the Baby-step Giant-step methods match, $p$ becomes less than $q$ at the scale of 36 qubits, causing a breakdown.
Therefore, the Schnorr group was not used in this experiment.
%%結果として、ともに正しく解 $s$ を得ることができ、最適化後の量子回路をシミュレートして10000サンプル得るまでの時間は、パターン1では47分、パターン3では30分であった。
%結果として、正しく解 $s$ を得ることができ、最適化後の量子回路をシミュレートして10000サンプル得るまでの時間は47分であった。
%結果として、正しく解 $s$ を得ることができ、最適化後の量子回路をシミュレートして10000サンプル得るまでの時間は30分～103分であった。
As a result, the correct solution $s$ was obtained, and the time to simulate the optimized quantum circuit until 10,000 samples were obtained ranged from 30 to 103 minutes.
%As a result, the correct solution $s$ was obtained, and the time to simulate the optimized quantum circuit until 10,000 samples were obtained was 47 minutes.
%%その結果を比較したものを表\ref{tab:compare-gates}に示す．
%%双方ともに用いた量子ビット数は等しいが，確かにパターン3の方が解くのに必要な量子ゲート数が多い．
%%特に，その傾向は最適化前後で変化しないことも，また，深さにおいて成り立つことも確認できる．

%各 $p$, $q$ の組合せについて10000サンプル実験し、測定結果からClassical Computation Partの手続きに沿って正しく解 $s$ を得られた成功確率をそれぞれ計算した。
For each combination of $p$ and $q$, 10,000 samples were experimented, and the success probability of correctly obtaining the solution $s$ from the measurement results was calculated according to the procedures of the Classical Computation Part.
%その1つ1つの結果を図\ref{fig:success-prob-heatmap}に表し、また、$q$を横軸に成功確率の変化を箱ひげ図として図\ref{fig:success-prob-boxplot}に表した。
Each result is shown in Figure~\ref{fig:success-prob-heatmap}, and the change in success probability with $q$ as the horizontal axis is shown as a box plot in Figure~\ref{fig:success-prob-boxplot}.
%%成功確率は $p$, $q$ の値によって様々な値をとるが、$q$ が大きくなるほど成功確率は 1 に近づく傾向にある。
%%\cite{martin2019revisitingshor}にて、位数 $q$ が大きくなるほど成功確率は上昇すると理論的に示されており、この結果と合致している。

%はじめに、成功確率の最小値と最大値に注目する。
First, let's focus on the minimum and maximum values of the success probability.
%成功確率の最小値は$q=3$で確認された0.4181であり、最大値は$q=13$で確認された0.9133である。
The minimum success probability was 0.4181, confirmed at $q=3$, and the maximum was 0.9133, confirmed at $q=13$.
%%前者を構成する$p$は161通り存在し、全体の1027通りの$p$のうち$q=3$をペアにもつ$p$は505通り存在するので、そのうち$161/505\approx 31.9\%$の成功確率が最小値の0.4181になっている。
%%$q=3$をペアにもつ$p$は全部で505通り存在し、そのうち最小の成功確率となった$p$は161通り、つまり$161/505\approx 31.9\%$である。
%$q=3$をペアにもつ$p$は全部で505通り存在し、そのうち最小の成功確率となった$p$は161通りなので、$161/505\approx 31.9\%$が該当することになる。
There are 505 combinations of $p$ paired with $q=3$, and 161 of them had the minimum success probability, so $161/505\approx 31.9\%$ correspond to this.
%%一方、後者を構成する$p$は1通りの$p=1483$のみであった。
%一方、最大値を構成する$p$は1通りの$p=1483$のみであった。
On the other hand, the maximum value was constituted by only $p=1483$.
%$q=13$には成功確率が90\%を超えるペアはもう1つ存在し、それは$p=1327$での成功確率0.9124である。
There is another pair with $q=13$ that has a success probability exceeding 90\%, which is $p=1327$ with a success probability of 0.9124.
%このような最小値と最大値は理論\cite{martin2019revisitingshor}でもこれまでのシミュレーション\cite{mandl2022implementationsshordlp}でも見つけられていない。
Such minimum and maximum values have not been found in previous theory~\cite{martin2019revisitingshor} or simulations~\cite{mandl2022implementationsshordlp}.
%これは、今回$p$だけでなく$q$についても変化させたすべての組合せについて実験したおかげである。
This is thanks to the experiment conducted with all combinations of not only $p$ but also $q$.
%なお、最大値は図\ref{fig:success-prob-boxplot}における外れ値として現れているが、$q\geq 19$では外れ値が出現していない。
The maximum value appears as an outlier in Figure~\ref{fig:success-prob-boxplot}, but no outliers appear for $q\geq 19$.
%これはそのような$q$における外れ値が存在しないことを必ずしも表しているのではなく、シミュレーションできる量子ビット数の制約により$p$を大きくできず、サンプル数が少なくなっているためであるかもしれない。
This does not necessarily indicate the absence of outliers for such $q$, but it may be due to the constraint on the number of qubits that can be simulated, limiting the size of $p$ and reducing the number of samples.

%次に、成功確率の平均値の振る舞いに注目する。
Next, let's focus on the behavior of the average success probability.
%図\ref{fig:success-prob-boxplot}より、図\ref{fig:success-prob-boxplot_p}での$p$と異なり、$q$が大きくなるにつれて成功確率は周期的に変化し、その幅はおよそ60\%から80\%の間である。
%その様子として、$q$が2のべき乗の直前 ($q=13,31,61$) で極小値を迎え、その直後に極大値となっている。
%$q$によって成功確率が周期的に変化し、そして$q$が2のべき乗-1で極小値の約60\%、2のべき乗+1で極大値の約82\%となることがEkeråのヒューリスティックな理論解析から得られているので、今回の結果は確かにそれを検証できている。
%そのうえで、極大値から緩やかに減少していき、極小値に近づくにつれて急峻に下がるという$q$について左右非対称な波形の形を今回は描くことができた。
%From Figure \ref{fig: success-prob-boxplot}, unlike $p $in Figure \ref{fig: success-prob-boxplot_p}, the probability of success varies periodically as $q $increases, with a range of about 60\% to 80\%.
%The minimum value of $q $is reached immediately before the power of 2 ($q=13,31,61$) and the maximum value is reached immediately after.
As shown in Figure \ref{fig:success-prob-boxplot}, unlike $p$ in Figure \ref{fig:success-prob-boxplot_p}, the success probability changes periodically as $q$ increases, with a range of approximately 60\% to 80\%.
Specifically, the success probability reaches a local minimum value just before $q$ is a power of 2 ($q=13, 31, 61$) and then immediately reaches a local maximum value.
%Ekerå's heuristic analysis shows that the probability of success varies periodically with $q$, and that $q$ is about 60\% of the minimum value at a power of 2 minus 1, and about 82\% of the maximum value at a power of 2 plus 1.
Ekerå's heuristic theoretical analysis~\cite{martin2019revisitingshor} shows that the success probability changes periodically with $q$, and that the success probability is approximately 60\% as a local minimum value at values of $q$ that are one less than powers of 2, and approximately 82\% as a local maximum value at values of $q$ that are one more than powers of 2.
This experimental results confirm that heuristic theoretical analysis.
%On top of that, we were able to draw a shape of asymmetrical waveform of $q $that gradually decreases from the maximum value and steeply decreases as it approaches the minimum value.
Furthermore, we were able to depict a waveform with an asymmetric shape for $q$, where the value decreases gradually from the local maximum value and then drops sharply as it approaches the local minimum value.
Note that Theorem~3 in \cite{martin2019revisitingshor} shows that when certain conditions listed there including classical post-processing are met, the success probability asymptotically approaches 1 as $q$ approaches infinity.
%Although we did not meet those conditions in the most naive conditions this time, within the initially feasible range of q, no significant change in the probability range was observed, except for a slight upward trend in the maximum value.
This time, since these conditions are not met and it is under the most naive conditions, no significant change in the probability range was observed, except for a slight upward trend in the maximum value within the initial experimental range of $q$.

\begin{figure*}[h]
    \includegraphics[scale=0.42]{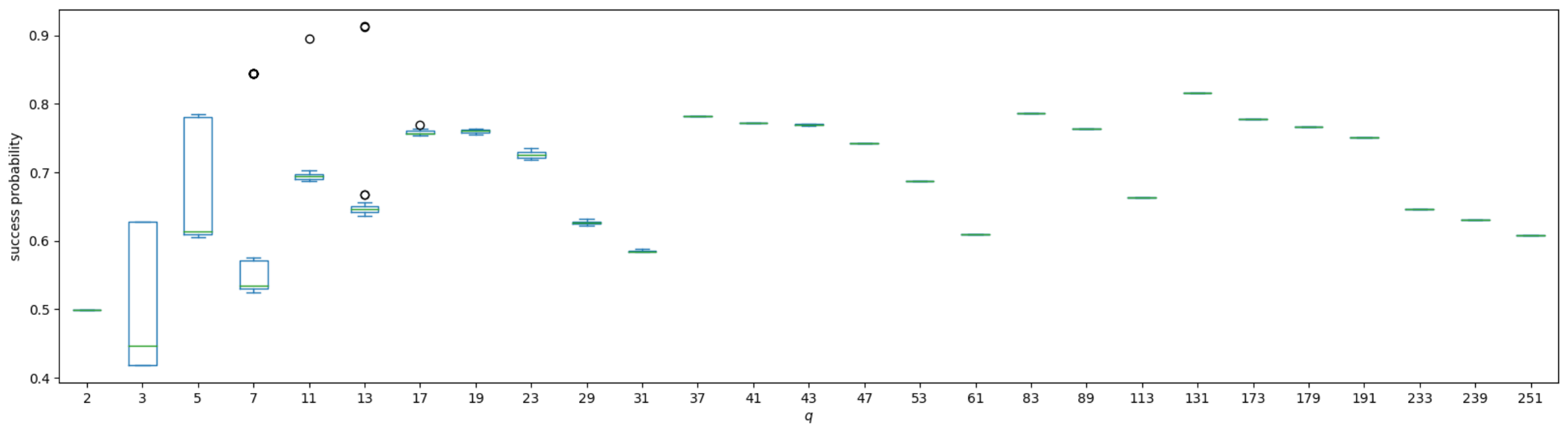}
    \caption{
        %$q$の大きさによる成功確率の変化を表す箱ひげ図
        Box plot showing the change in success probability with the size of $q$.
    }
    \label{fig:success-prob-boxplot}
\end{figure*}

\begin{figure*}[h]
    \includegraphics[scale=0.42]{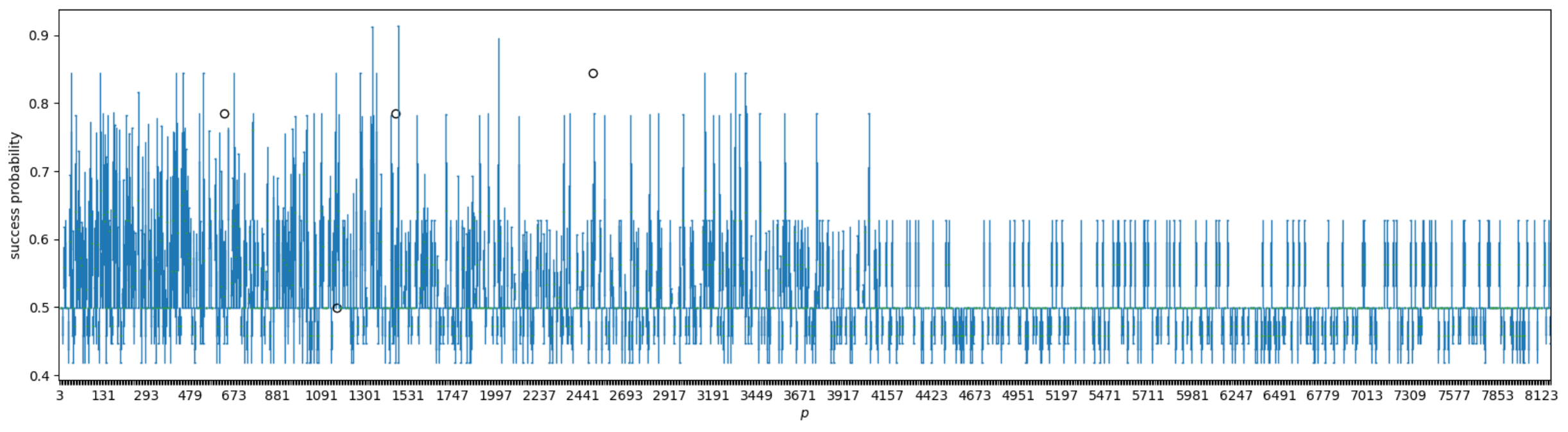}
    \caption{
        %$p$の大きさによる成功確率の変化を表す箱ひげ図
        Box plot showing the change in success probability with the size of $p$.
    }
    \label{fig:success-prob-boxplot_p}
\end{figure*}

\subsection{Resource Estimation by Generated Circuits}

%\textcolor{red}{（Safe-prime groupのpのビット数でSchnorr groupを表すとどこまで減るかを定量的に示すことをメインに変更予定）}

%ゲート数を少なくできるR-ADDによる量子回路を用いてリソース見積もりをおこなう。
We estimate the resources using quantum circuits with R-ADD, which can reduce the number of gates.
%\cite{yamaguchi,yamaguchi2023experimentsresource}では素因数分解する値のビット数ごとに10通りの合成数をランダムに選び、それを解く量子回路を生成してリソース値を得ていた。
In \cite{yamaguchi,yamaguchi2023experimentsresource}, ten composite numbers were randomly selected for each bit length of the values to be factored, and quantum circuits were generated to solve them to obtain resource values.
%そこでの合成数のビット数は8から24までで、結果として実験に用いられた合成数の総数は170個であった。
The bit lengths of the composite numbers ranged from 8 to 24, resulting in a total of 170 composite numbers used in the experiments.
%一方、離散対数問題では標数 $p$ だけでなく位数 $q$ にも自由度がある。
On the other hand, in the DLP, there are degrees of freedom not only in the modulo $p$ but also in the order $q$.
%そこで与えられた量子ビット数で表現できるすべての $p$, $q$ のビット数それぞれについて、さらに5通りずつの具体的な $p$, $q$ の値をランダムに選び、そこから量子回路を生成して実験した。
Therefore, for each bit length of $p$ and $q$ that can be represented by the given number of qubits, five specific values of $p$ and $q$ were randomly selected, and quantum circuits were generated and experimented with.
%例えば、量子ビット数が40のとき、表現できる$p$, $q$のビット数の組合せは式\eqref{eq:radd-space-complexity}、\eqref{eq:general-space-complexity}より、$(\lceil\log p\rceil, \lfloor\log q\rfloor)=(8,6),(10,3)$である。
%そのうち、前者は$(p,q)=(167, 83), (179, 89), (227, 113)$のみが存在するので、その3通りすべてを用いる。
%後者は5種類以上の組合せが存在するので、そこからランダムに$(p,q)=(937, 13), (859, 13), (859, 11), (911, 13), (661, 11)$を選んだ。
%例えば、量子ビット数が41のとき、表現できる$p$, $q$のビット数の組合せは式\eqref{eq:radd-space-complexity}、\eqref{eq:general-space-complexity}より、$(\lceil\log p\rceil, \lfloor\log q\rfloor)=(9,5),(11,2)$である。
For example, when the number of qubits is 41, the combinations of bit lengths of $p$ and $q$ that can be represented are $(\lceil\log p\rceil, \lfloor\log q\rfloor)=(9,5),(11,2)$ according to \eqref{eq:radd-space-complexity} and \eqref{eq:general-space-complexity}.
%そのうち、前者は$(p,q)=(283, 47), (367, 61), (431, 43)$のみが存在するので、その3通りすべてを用いる。
Among them, only $(p,q)=(283, 47), (367, 61), (431, 43)$ exist for the former, so all three combinations were used.
%後者は5種類以上の組合せが存在するので、そこからランダムに$(p,q)=(1871, 5), (1901, 5), (1321, 5), (1597, 7), (1231, 5)$を選んだ。
For the latter, more than five combinations exist, so $(p,q)=(1871, 5), (1901, 5), (1321, 5), (1597, 7), (1231, 5)$ were randomly selected.
%量子ビット数としては12から90までで、結果として実験に用いられた $p$, $q$ の組合せの総数は954個であった。
The number of qubits ranged from 12 to 90, resulting in a total of \numberofonlyoptimizeradd{} combinations of $p$ and $q$ used in the experiments.

%まず，量子ビット数は式\eqref{eq:general-space-complexity}に式\eqref{eq:radd-space-complexity}を代入したものを連続化した
First, the number of qubits matches the surface depicted by the continuous form of the equation obtained by substituting \eqref{eq:radd-space-complexity} into \eqref{eq:general-space-complexity}
\begin{align}
    3\log p+2\log q+4\label{eq:space-complexity-continuous}
\end{align}
%で描かれた面と一致している（図\ref{fig:qubits}）．
(see Figure~\ref{fig:qubits}).
%一方，最適化前後の深さ・最適化前後のnon-Cliffordゲート数は式\eqref{eq:radd-time-complexity}をベースとした
On the other hand, the depth before and after optimization and the number of non-Clifford gates before and after optimization were fitted using the equation based on \eqref{eq:radd-time-complexity}
\begin{align}
    &a(\log p)^2\log q + b(\log p)^2 + c\log p\log q + d\log p \notag\\
    &+ e\log q + f\label{eq:fitting}
\end{align}
%を用い，$a,b,c,d,e,f\in\mathbb{R}$についてSciPy 1.10.1~\cite{scipy}の非線形最小二乗法(\lstinline$curve_fit$)により最適化する方法でフィッティングした．
and optimized for $a,b,c,d,e,f\in\mathbb{R}$ using the nonlinear least squares method (\lstinline$curve_fit$) of SciPy 1.10.1~\cite{scipy}.
%そのフィッティングの精度を評価するため10分割交差検証をおこない、それぞれのfoldにおけるテストデータの平均値と最小二乗誤差について、すべてのfoldにわたって平均をとったものを表\ref{tab:evaluation-fitting-accuracy}に示す。
To evaluate the accuracy of this fitting, 10-fold cross-validation was performed, and the average value and mean squared error of the test data for each fold were averaged across all folds, as shown in Table~\ref{tab:evaluation-fitting-accuracy}.
%このようにいずれも平均値の10分の1程度の誤差に収まっている。
As shown, the errors are within about one-tenth of the average value.
%ここから、ゲート数・回路深さ・non-Cliffordゲート数について最適化の有無にかかわらずいずれについても、式\eqref{eq:fitting}で表すことができ、その違いは係数のみにあるといえる。
From here, regardless of whether optimization is applied or not, the number of gates, circuit depth, and the number of non-Clifford gates can all be expressed by \eqref{eq:fitting}, with the differences being only in the coefficients.

\begin{table}[tb]
    \caption{
        %式\eqref{eq:fitting}によるフィッティングを10分割交差検証のそれぞれの訓練データについておこない、そのテストデータの平均値とテストデータからの二乗平均平方根誤差について、すべてのfoldにわたって平均をとった結果
        The table shows the average of the test data and the root mean squared error (RMSE) from the test data averaged over all folds, obtained by performing fitting using \eqref{eq:fitting} on each training dataset in the 10-fold cross-validation.
    }
    \centering
    \begin{tabular}{|l|l|l|}
        \hline
        & Average & RMSE \\ \hline
        \begin{tabular}{l}
            \#gates \\before optimization 
        \end{tabular} & $7.66\times 10^{6}$ & $4.22\times 10^{5}$ \\ \hline
        \begin{tabular}{l}
            \#gates \\after optimization 
        \end{tabular} & $2.49\times 10^{6}$ & $1.36\times 10^{5}$ \\ \hline
        \begin{tabular}{l}
            depth \\before optimization 
        \end{tabular} & $4.20\times 10^{6}$ & $2.25\times 10^{5}$ \\ \hline
        \begin{tabular}{l}
            depth \\after optimization 
        \end{tabular} & $1.89\times 10^{6}$ & $1.00\times 10^{5}$ \\ \hline
        \begin{tabular}{l}
            \#non-Clifford gates \\before optimization 
        \end{tabular} & $3.56\times 10^{6}$ & $1.95\times 10^{5}$ \\ \hline
        \begin{tabular}{l}
            \#non-Clifford gates \\after optimization
        \end{tabular} & $2.38\times 10^{6}$ & $1.31\times 10^{5}$ \\ \hline
    \end{tabular}
    \label{tab:evaluation-fitting-accuracy}
\end{table}

%訓練データとして831個すべての組合せを用いたときのフィッティングの結果を図\ref{fig:depth}, \ref{fig:nonclifford}に示す。
The fitting results using all \numberofonlyoptimizeradd{} combinations as training data are shown in Figures~\ref{fig:gates}, \ref{fig:depth}, and \ref{fig:nonclifford}.
%このときの各パラメータは表のようになった。
The parameters are as shown in the Table~\ref{tab:fitting-parameters}.

%\begin{enumerate}
%    \item \cite{yamaguchi,yamaguchi2023experimentsresource}では素因数分解する値のビット数ごとに10通りのサンプルをランダムに生成して実験していた。
%    \item 離散対数問題では $p$ だけでなく $q$ にも自由度がある。そこで与えられた量子ビット数で表現できるすべての $p$, $q$ のビット数それぞれについて、さらに5通りずつのサンプルをランダムに生成して実験した。
%    \item 具体的には75量子ビットまでの各組合せ680通り\textcolor{red}{（今も実験は続いているのでもう少し増える）}について実験した
%    \item SciPy 1.10.1~\cite{scipy}の非線形最小二乗法により最適化する方法でフィッティング（図\ref{fig:depth},\ref{fig:nonclifford}）
%    \item フィッティングで得られたパラメータを利用して、$p=2048$ビットでのリソース見積もりをした（表\ref{tab:r-add-2048}）
%\end{enumerate}

\begin{figure}[tb]
    \includegraphics[scale=0.65]{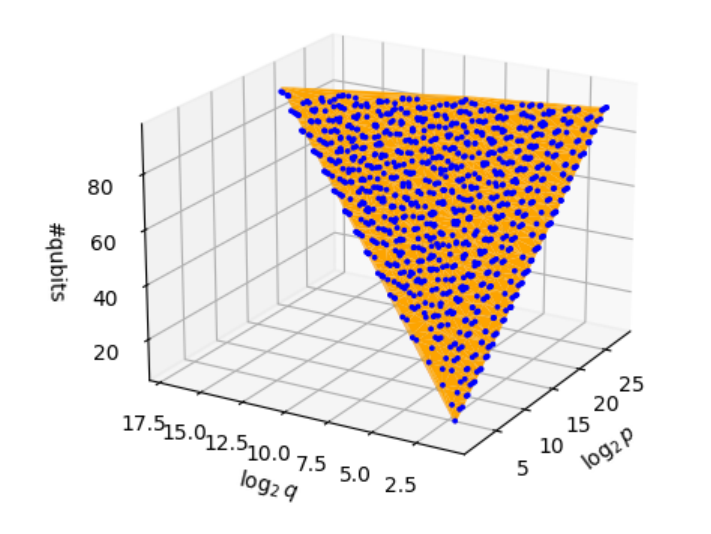}
    \caption{
        %量子ビット数．
        Number of qubits.
        %青い点がそれぞれの実験で得られた結果を表し，オレンジの面が式\eqref{eq:space-complexity-continuous}を表す．
        The blue dots represent the results obtained from each experiment, and the orange surface represents \eqref{eq:space-complexity-continuous}.
    }
    \label{fig:qubits}
\end{figure}

\begin{figure}[tb]
    \includegraphics[scale=0.65]{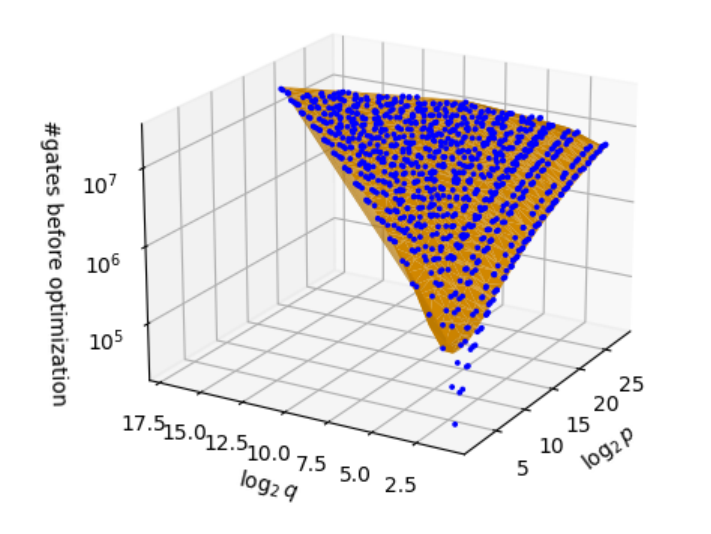}
    \includegraphics[scale=0.65]{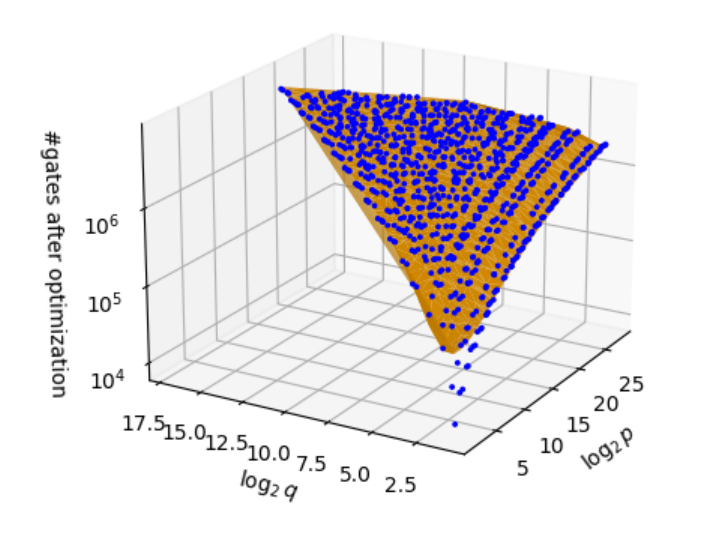}
    \caption{
        %最適化前後のゲート数
        Number of gates before and after optimization.
        The blue dots represent the results obtained from each experiment, and the orange surface represents \eqref{eq:fitting} using the parameters shown in Table~\ref{tab:fitting-parameters}.
    }
    \label{fig:gates}
\end{figure}

\begin{figure}[tb]
    \includegraphics[scale=0.65]{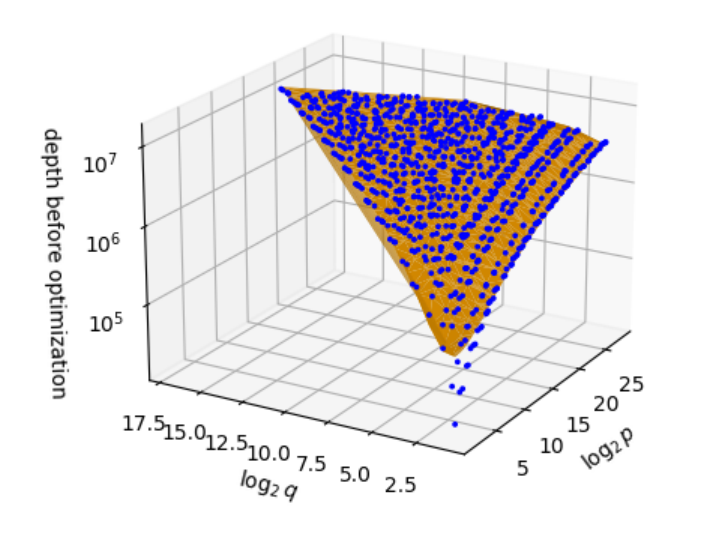}
    \includegraphics[scale=0.65]{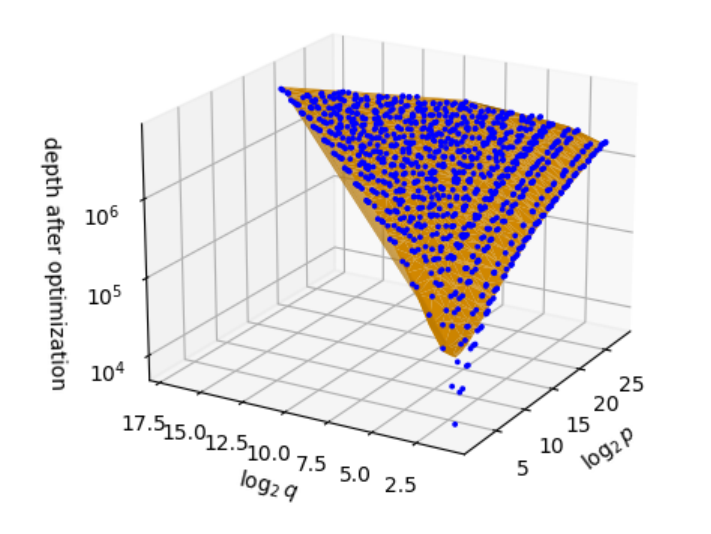}
    \caption{
        %最適化前後の深さ
        Depth before and after optimization.
    }
    \label{fig:depth}
\end{figure}

\begin{figure}
    \includegraphics[scale=0.65]{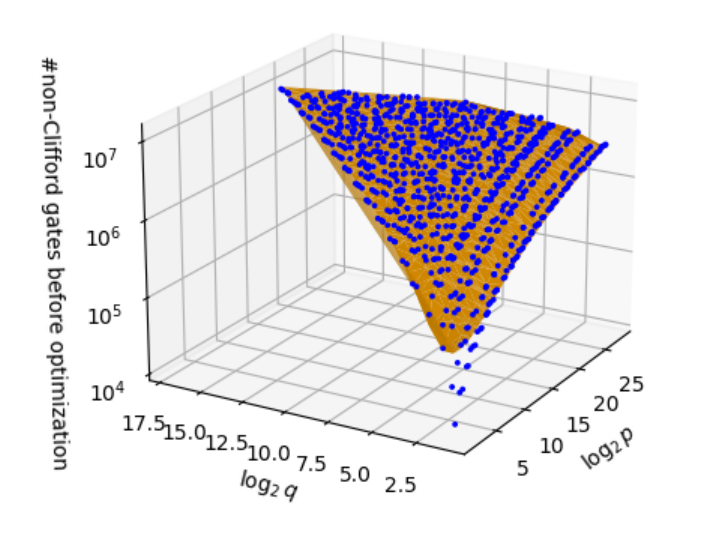}
    \includegraphics[scale=0.65]{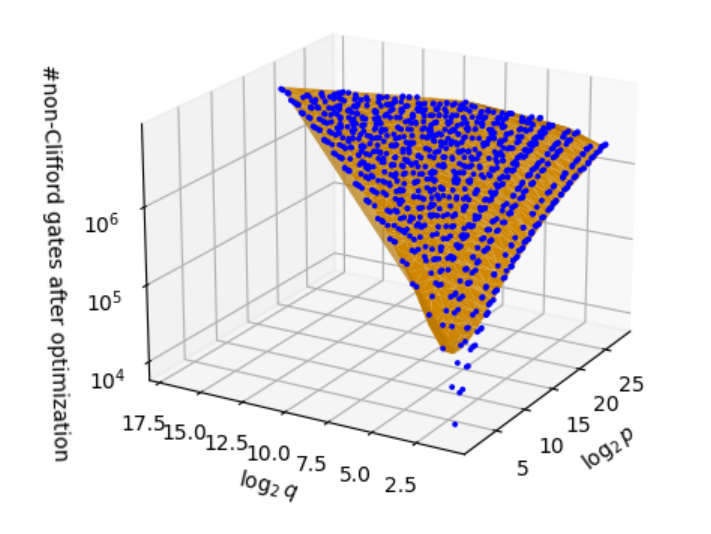}
    \caption{
        %最適化前後のnon-Cliffordゲート数
        Number of non-Clifford gates before and after optimization.
    }
    \label{fig:nonclifford}
\end{figure}

\begin{table*}
    \caption{
        %フィッティングにより得られた式\eqref{eq:fitting}の各パラメータ
        Parameters of \eqref{eq:fitting} obtained from the fitting.
    }
    \centering
    \begin{tabular}{|l|l|l|l|l|l|l|}
        \hline
        & $a$ & $b$ & $c$ & $d$ & $e$ & $f$ \\ \hline
        \#gates before optimization & $2843.058$ & $5260.803$ & $28708.363$ & $-62171.358$ & $-192948.809$ & $459828.645$ \\ \hline
        \#gates after optimization & $923.891$ & $1720.901$ & $9263.616$ & $-19976.457$ & $-62110.960$ & $148050.262$ \\ \hline
        depth before optimization & $1534.728$ & $3178.924$ & $14408.472$ & $-30183.329$ & $-97531.979$ & $228340.308$ \\ \hline
        depth after optimization & $694.800$ & $1421.309$ & $6349.034$ & $-13164.850$ & $-42942.880$ & $100263.303$ \\ \hline
        \#non-Clifford gates before optimization & $1320.030$ & $2444.355$ & $13297.437$ & $-28879.570$ & $-89484.421$ & $213398.262$ \\ \hline
        \#non-Clifford gates after optimization & $882.216$ & $1647.665$ & $8882.046$ & $-19108.039$ & $-59544.637$ & $141771.038$ \\ \hline
    \end{tabular}
    \label{tab:fitting-parameters}
\end{table*}

%この結果を踏まえて，$p$が2048ビットのときのリソースを推定した．
Based on these results, we estimated the resources when $p$ is 2048 bits.
%$p$が2048ビットのとき，Safe-prime groupとSchnorr groupでは$q$のビット数が異なり，Safe-prime groupでは$p=2q+1$を満たすために$q$はおよそ2047ビットとなる．
When $p$ is 2048 bits, the bit length of $q$ differs between safe-prime groups and Schnorr groups.
In safe-prime groups, $q$ is approximately 2047 bits to satisfy $p=2q+1$.
%一方，Schnorr groupでは，$p$により時間計算量が決まる数体篩法と，$q$により時間計算量が決まるBaby-step Giant-step法の2つの時間計算量を一致させるため，$q$は256ビットとなる\cite{dss-256}．
On the other hand, in Schnorr groups, $q$ is 256 bits~\cite{dss-256} to match the time complexity of the NFS, which is determined by $p$, and the Baby-step Giant-step method, which is determined by $q$.
%表\ref{tab:r-add-2048}に示す結果より，Safe-prime groupとSchnorr groupでは必要なゲート数や深さが異なっている．
As shown in Table~\ref{tab:r-add-2048}, the required number of gates and depth differ between safe-prime groups and Schnorr groups.
%古典計算においてはSchnorr groupの$p$, $q$の決め方から明らかなように，$q$が256ビット以上である両者の離散対数問題を解くのに必要な時間計算量は一致する．
In classical computation, as is evident from the way $p$ and $q$ are determined in Schnorr groups, the time complexity required to solve the DLP for both, where $q$ is 256 bits or more, is the same.
%しかし，量子計算ではそうではなく，なぜなら式\eqref{eq:general-time-complexity}のように，全体の時間計算量が$p$と$q$に分離せず両方に依存しているためである．
However, this is not the case in quantum computation because the overall time complexity depends on both $p$ and $q$, as shown in \eqref{eq:general-time-complexity}.
%ここで、量子計算機にとってSchnorr型はどれほど簡単なのかはSame~\#qubits型により理解できる。
%$p$が2048ビットのときのSchnorr型と同じ量子ビット数で表されるSame~\#qubits型の必要とするリソース推定結果も図\ref{tab:r-add-2048}に示している。
%ここから、量子計算機にとっては$p=2048$ビットのSchnorr型は、$p=1479$ビットのSame~\#qubits型未満の難しさであると理解できる。
%%最後に，パターン3では式\eqref{eq:qadd-pattern3}より，$q$は$2048/3\approx 683$ビットとなる．
%ゆえに，DSA署名などで使われているSchnorr型\cite{dss-256}は量子計算機によってSafe-prime型よりも簡単に解かれてしまう．

\begin{table}[tb]
    \caption{
        %R-ADD~\cite{radd}を冪剰余計算に利用した際の，$p$が2048ビットのときのSafe-prime型とSchnorr型におけるリソース推定結果
        Resource estimation results for safe-prime and Schnorr groups when $p$ is 2048 bits, using R-ADD\cite{radd} for modular exponentiation.
        %%、および、Schnorr型と同数の量子ビットで表されるSame~\#qubits型でのリソース推定結果
    }
    \centering
    %\begin{tabular}{|c|l|l|l|}
    \begin{tabular}{|l|l|l|}
        \hline
        ($|p|,|q|$) &
        \begin{tabular}{c}
            Safe-prime\\(2048, 2047)
        \end{tabular}
        &
        \begin{tabular}{c}
            Schnorr\\(2048, 256)
        \end{tabular}
        %&
        %\begin{tabular}{c}
        %    Same~\#qubits\\(1479, 1109)
        %\end{tabular}
        \\ \hline
        \begin{tabular}{l}
            \#qubits
        \end{tabular} & 10241 & 6659 \\ \hline
        \begin{tabular}{l}
            \#gates \\before optimization
        \end{tabular} & $2.46\times 10^{13}$ & $3.09\times 10^{12}$ \\ \hline
        \begin{tabular}{l}
            \#gates \\after optimization
        \end{tabular} & $7.98\times 10^{12}$ & $1.00\times 10^{12}$ \\ \hline
        \begin{tabular}{l}
            depth \\before optimization
        \end{tabular} & $1.33\times 10^{13}$ & $1.67\times 10^{12}$ \\ \hline
        \begin{tabular}{l}
            depth \\after optimization
        \end{tabular} & $6.00\times 10^{12}$ & $7.55\times 10^{11}$ \\ \hline
        \begin{tabular}{l}
            \#non-Clifford gates \\before optimization
        \end{tabular} & $1.14\times 10^{13}$ & $1.43\times 10^{12}$ \\ \hline
        \begin{tabular}{l}
            \#non-Clifford gates \\after optimization
        \end{tabular} & $7.62\times 10^{12}$ & $9.59\times 10^{11}$ \\ \hline
        %量子ビット数 & 10241 & 6659 & 6659 \\ \hline
        %\begin{tabular}{c}
        %    最適化前の\\深さ
        %\end{tabular}
        %& $1.19\times 10^{13}$ & $1.51\times 10^{12}$ & $3.39\times 10^{12}$ \\ \hline
        %\begin{tabular}{c}
        %    最適化後の\\深さ
        %\end{tabular}
        %& $5.44\times 10^{12}$ & $6.86\times 10^{11}$ & $1.54\times 10^{12}$ \\ \hline
        %\begin{tabular}{c}
        %    最適化前の\\non-Clifford\\ゲート数
        %\end{tabular}
        %& $1.02\times 10^{13}$ & $1.29\times 10^{12}$ & $2.89\times 10^{12}$ \\ \hline
        %\begin{tabular}{c}
        %    最適化後の\\non-Clifford\\ゲート数
        %\end{tabular}
        %& $6.82\times 10^{12}$ & $8.60\times 10^{11}$ & $1.93\times 10^{12}$ \\ \hline
    \end{tabular}
    \label{tab:r-add-2048}
\end{table}

%Schnorr groupの強度は量子計算のもとではどこまで低下するのかを、safe-prime groupの$p$の大きさで測った。
The strength of Schnorr groups under quantum computation can be measured by the size of $p$ in safe-prime groups.
%つまり、Shorの量子アルゴリズムを用いたとき、$p=2048$ bitのSchnorr groupの計算量は、どの大きさの$p$のsafe-prime groupの計算量と一致するのかを調べた。
Specifically, when using Shor's quantum algorithm, we investigated the size of $p$ in safe-prime groups that matches the computational complexity of the Schnorr group with $p=2048$ bits.
%式\eqref{eq:fitting}において、safe-prime groupでは$\log q=\log p-1$なので、
In \eqref{eq:fitting}, for safe-prime groups, since $\log q=\log p-1$, it can be transformed as follows:
\begin{align}
    &a(\log p)^2(\log p-1) + b(\log p)^2 + c\log p(\log p-1) \notag\\
    &+ d\log p + e(\log p-1) + f \\
    =&a(\log p)^3 + (-a+b+c)(\log p)^2 + (-c+d+e)\log p \notag\\
    &+ (-e+f). \label{eq:fitting-safeprime}
\end{align}
%と変形できる。
%各リソースについてこの式\eqref{eq:fitting-safeprime}にパラメータ（表\ref{tab:fitting-parameters}）を代入したものと、表\ref{tab:r-add-2048}のSchnorr groupの結果が一致する$p$を求めると、表\ref{tab:schnorr-strength-in-safeprime}のようになる。
By substituting the parameters (Table~\ref{tab:fitting-parameters}) into this \eqref{eq:fitting-safeprime} for each resource and finding the $p$ that matches the results of the Schnorr group in Table~\ref{tab:r-add-2048}, we obtain Table~\ref{tab:schnorr-strength-in-safeprime}.
%このように、量子計算の枠組みではちょうど半分のビット数の$p$の暗号強度しかSchnorr groupはもっていないと明らかになった。
%Thus, it became clear that in the framework of quantum computation, the Schnorr group only has half the cryptographic strength of the $p$ bit length.
%ここでは最適化の有無にかかわらずゲート数・回路深さ・non-Cliffordゲート数のいずれであっても、Shorの量子アルゴリズムのもとで$p=2048$ bitのSchnorr groupはちょうどほぼ$p=1024$ bitのsafe-prime groupの暗号強度しかもっていないと示されている。
Here, it is shown that regardless of optimization, the number of gates, circuit depth, or the number of non-Clifford gates, under Shor's quantum algorithm, a Schnorr group with $p=2048$ bits has almost the same cryptographic strength as a safe-prime group with $p=1024$ bits.
%ゆえに、式\eqref{eq:fitting}のパラメータ$a$,$b$,$c$,$d$,$e$,$f$にかかわらない普遍性が存在すると推測できる。
Therefore, it can be inferred that there is a universality independent of the parameters $a$, $b$, $c$, $d$, $e$, and $f$ in \eqref{eq:fitting}.

\begin{table}
    \caption{
        %Safe-prime groupの式\eqref{eq:fitting-safeprime}に各パラメータを代入したものと表\ref{tab:r-add-2048}のSchnorr groupの$p=2048$ bitリソース推定結果が一致する$p$の大きさ
        The size of $p$ where the substitution of each parameter into the \eqref{eq:fitting-safeprime} for the safe-prime group matches the $p=2048$ bit resource estimation result of the Schnorr group in Table~\ref{tab:r-add-2048}.
    }
    \centering
    \begin{tabular}{|l|l|}
        \hline
        & $|p|$ \\ \hline
        \#gates before optimization & $1024.510$ \\ \hline
        \#gates after optimization & $1024.533$ \\ \hline
        depth before optimization & $1024.846$ \\ \hline
        depth after optimization & $1024.861$ \\ \hline
        \#non-Clifford gates before optimization & $1024.515$ \\ \hline
        \#non-Clifford gates after optimization & $1024.532$ \\ \hline
    \end{tabular}
    \label{tab:schnorr-strength-in-safeprime}
\end{table}

%今回は$p=2048$ bitのときにちょうど半分の$p=1024$ bitと実験的に示されたが、一般の$p$における挙動を確認する。
%This time, it was experimentally shown that when $p=2048$ bits, it is exactly half, $p=1024$ bits, but we will confirm the behavior for general $p$.
%そこでここからは、量子計算においてSchnorr groupとsafe-prime groupの暗号強度が等しくなる$p$について、より一般的な関係式を理論的に解析する。
Henceforth, we theoretically analyze a more general relational expression for $p$ where the cryptographic strength of the Schnorr group and the safe-prime group becomes equal in quantum computation.
%Schnorr groupにおける$p$と$q$のビット数の組合せは\cite{dss}より、$(|p|,|q|)=(1024,160),(2048,224),(2048,256),(3072,256)$であるため、およそ
According to \cite{dss}, the combination of bit lengths for $p$ and $q$ in Schnorr groups is $(|p|,|q|)=(1024,160),(2048,224),(2048,256),(3072,256)$, so approximately
\begin{align}
    \log q=\chi \log p
\end{align}
%ただし、
where
\begin{align}
    \chi\approx 1/10 \label{eq:chi}
\end{align}
%と表される。
is expressed.
%すると、式\eqref{eq:fitting}より、
Then, from \eqref{eq:fitting},
\begin{align}
    &a(\log p)^2(\chi \log p) + b(\log p)^2 + c(\log p)(\chi \log p) \notag\\
    &+ d\log p + e\chi\log p + f \\
    =&a\chi(\log p)^3 + (b+c\chi)(\log p)^2 + (d+e\chi)\log p + f \label{eq:fitting-schnorr}
\end{align}
%と変形できる。
%can be transformed.
is obtained.
%Safe-prime groupの$p$を$p_\mathrm{sp}$、Schnorr groupの$p$を$p_\mathrm{S}$として、式\eqref{eq:fitting-safeprime}と式\eqref{eq:fitting-schnorr}を等号で結んだ方程式を変形すると
Let $p_\mathrm{sp}$ be the $p$ of the safe-prime group and $p_\mathrm{S}$ be the $p$ of the Schnorr group.
By transforming the equation that connects \eqref{eq:fitting-safeprime} and \eqref{eq:fitting-schnorr} with an equal sign, we get
\begin{align}
    &a((\log p_\mathrm{sp})^3-\chi(\log p_\mathrm{S})^3) + (-a+b+c)(\log p_\mathrm{sp})^2 \notag\\
    &- (b+c\chi)(\log p_\mathrm{S})^2 + (-c+d+e)\log p_\mathrm{sp} \notag\\
    &- (d+e\chi)\log p_\mathrm{S} -e = 0.
\end{align}
%\begin{align}
%    &a((\log p_\mathrm{sp})^3-(\log p_\mathrm{sp})^2-\chi(\log p_\mathrm{S})^3) + b((\log p_\mathrm{sp})^2-(\log p_\mathrm{S})^2) \\
%    + &c((\log p_\mathrm{sp})^2-\log p_\mathrm{sp}-\chi(\log p_\mathrm{S})^2) + d(\log p_\mathrm{sp}-\log p_\mathrm{S}) \\
%    + &e(\log p_\mathrm{sp}-1-\chi\log p_\mathrm{S}) = 0
%\end{align}
%となる。
%ここで、$\log p_\mathrm{sp}$と$\log p_\mathrm{S}$の2次の項までを無視すると、
By ignoring the second-order terms of $\log p_\mathrm{sp}$ and $\log p_\mathrm{S}$, we have
\begin{align}
    (\log p_\mathrm{sp})^3 &\sim \chi(\log p_\mathrm{S})^3 \\
    \log p_\mathrm{sp} &\sim \sqrt[3]{\chi} \log p_\mathrm{S}.
\end{align}
%式\eqref{eq:chi}より、両者のビット数は2倍程度の差があるとわかる。
From \eqref{eq:chi}, it is understood that there is a difference of about twice in the bit length of both.
%特に、Schnorr groupにて$(|p|,|q|)=(2048,256)$の場合は$\chi=8=2^3$となるので、$\log p_\mathrm{sp}\sim 2\log p_\mathrm{S}$、つまりShorアルゴリズムのもとで$p=2048$ bitのSchnorr groupの暗号強度は$p=1024 bit$のsafe-primeとほぼ同等であるという表\ref{tab:schnorr-strength-in-safeprime}の結果を再現できる。
In particular, in the Schnorr group with $(|p|,|q|)=(2048,256)$, $\chi=1/8=1/2^3$, so $\log p_\mathrm{sp}\sim (1/2)\log p_\mathrm{S}$.
This means that under Shor's algorithm, the cryptographic strength of a Schnorr group with $p=2048$ bits is almost equivalent to that of a safe-prime with $p=1024$ bits, reproducing the results shown in Table~\ref{tab:schnorr-strength-in-safeprime}.
%したがって、具体的なパラメータによらずとも、Schnorr groupの暗号強度はsafe-prime groupの$p$のビット数では半分程度にまで低下すると示された。
Therefore, it was shown that the cryptographic strength of the Schnorr group is reduced to about half in terms of the bit length of $p$ in the safe-prime group, regardless of specific parameters.

%さて、表\ref{tab:r-add-2048}から貪欲法による回路最適化\cite{qulacs}前後のリソース変化量を計算できる。
%深さについてはおよそ0.45倍、non-Cliffordゲート数についてはおよそ0.67倍に削減できている。
%したがって、先行研究におけるリソース見積もりにおいても同様に回路最適化を施してリソースをさらに削減できると考えられる。
%特に、今回用いたR-ADDベースである\cite{Gidney2021howtofactorbit}において同程度のリソース削減率が適用できると仮定し、
%計算時間とnon-Cliffordゲート数の増減幅が一致しているとしたとき、
%$p=2048$ bitにおいて、安全素数型では7.0時間が $7.0\times 0.67\approx 4.7$ 時間、Schnorr型では0.8時間が $0.8\times 0.67\approx 0.54$ 時間と計算時間は削減される。
%このうち、安全素数型の計算時間はRSA型合成数の素因数分解と同程度の時間がかかる\cite{Gidney2021howtofactorbit}。
%一方、現在の量子計算機では1時間に1回程度の間隔で致命的なエラーが発生する\cite{google2024quantumerrorcorrection}ことから、Schnorr型についてはその範疇で時間削減されるため解かれる危険性が増大している。

\section{Conclusion}

%\textcolor{red}{（変更予定）}
%
%Q-ADDをベースとした量子回路を実際に量子シミュレータ上で動かし、成功確率について\cite{martin2019revisitingshor}で考えられた理論と一致することを実証した。
%また、R-ADDをベースとして生成した回路から最適化前後での量子回路の深さとnon-Cliffordゲート数の増減を推定した。
%ここから、Schnorr型はSafe-prime型と比べて量子計算機にとって解くのに必要なリソースが少なくなることを示した。
%したがって、量子計算機がより性能を向上させて大規模な論理量子演算が可能となる前に、少なくともSchnorr型の使用は取りやめるべきである。

%DLPを解くShorアルゴリズムについて、これまでおこなわれてこなかった標数$p$と位数$q$の包括的な組合せの大規模シミュレーションを実施した。
We conducted large-scale quantum simulations of comprehensive combinations of modulo $p$ and order $q$ for Shor's algorithm solving the DLP, which had not been done before.
%32量子ビットまでのすべての$p$と$q$の組合せ1860通りの実験を通して、成功確率の実際の値を一覧として得ることができた。
Through 1,860 experiments with all combinations of $p$ and $q$ up to 32 qubits, we obtained a list of actual success probabilities.
%%そこから、Ekeråのヒューリスティックな理論解析\cite{martin2019revisitingshor}の正しさを示し、そのうえで位数$q$による成功確率の周期的な変動や、最小値・最大値を発見した。
%From this, we demonstrated the correctness of Ekerå's heuristic theoretical analysis~\cite{martin2019revisitingshor}, and discovered periodic fluctuations in success probabilities due to order $q$, as well as minimum and maximum values.
%そこから、位数$q$による成功確率の周期的な変動を含むEkeråのヒューリスティックな理論解析[20]の正しさを示し、そのうえで最小値・最大値や詳細な波形の形を得られた。
From this, the correctness of Ekerå's heuristic theoretical analysis~\cite{martin2019revisitingshor}, which includes periodic fluctuations in success probability due to order $q$, was demonstrated, and the minimum and maximum values as well as the detailed waveform shape were obtained.
%また、より大きな値の$p$と$q$の組合せの実験を通して得られた回路規模を外挿し、2048 bitにおける必要なリソース値推定をした。
Additionally, by extrapolating the circuit scale obtained from experiments with larger combinations of $p$ and $q$, we estimated the necessary resource values for 2048 bits.
%ここから、Shorの量子アルゴリズムのもとでSchnorr groupはsafe-prime groupよりも弱い暗号強度であることを示し、そのうえで、Schnorr groupの暗号強度はsafe-prime groupにとっては半分程度の大きさの$p$にまで低下することを実験的・理論的に示した。
From this, we showed that under Shor's quantum algorithm, the Schnorr group has weaker cryptographic strength than the safe-prime group, and furthermore, the cryptographic strength of the Schnorr group experimentally and theoretically decreases to about half the size of $p$ for the safe-prime group.

%今後は、\cite{Gidney2021howtofactorbit,Gouzien2021factoring}のような量子エラー訂正を前提とした際や、elliptic curve discrete logarithm problemへの拡張をしたいと考えている。
In the future, we aim to extend this to scenarios assuming quantum error correction as in \cite{Gidney2021howtofactorbit,Gouzien2021factoring}, and to the elliptic curve discrete logarithm problem.

\section*{Acknowledgment}

%The authors deeply thank Martin Ekerå for
%著者らは成功確率の振動の既存研究を含め様々な有用な指摘をしてくださったMartin Ekeråに深く感謝している。
The authors are deeply grateful to Martin Ekerå for his various valuable comments, including existing research on the oscillation of success probability.
\appendix

\section{2-controlling qubit trick is optimal in time when time of modular exponentiation $>$ time of measurement}

%この節では図\ref{fig:circuit-2qubits-dlp}に関連して、冪剰余計算にかかる時間が測定時間よりも長い場合は測定量子ビット数が2であれば最小の時間で計算可能であることを示す。
In this section, in relation to Figure~\ref{fig:circuit-2qubits-dlp}, we demonstrate that if the time required for modular exponentiation is longer than the measurement time, the computation can be performed in the minimum time with two measurement qubits.
%冪剰余計算の各ユニットMod-MULにかかる時間を$U$、測定および古典的条件つき演算にかかる時間を$M$とする。
Let $U$ be the time required for each unit Mod-MUL of modular exponentiation, and $M$ be the time required for measurement and classical conditional operations.
%また、Mod-MULの計算回数を$L$とする。
Also, let $L$ be the number of Mod-MUL computations.
%これは、素因数分解向けのShorアルゴリズムであれば合成数のビット数の2倍、DLP向けのShorアルゴリズムであれば式\eqref{eq:1st-2nd-registers}となる。
For Shor's algorithm for factorization, this is twice the number of bits of the composite number, and for Shor's algorithm for DLP, it is given by \eqref{eq:1st-2nd-registers}.

\begin{theorem}
    %$U\geq M$のとき、回路全体でかかる時間は
    When $U\geq M$, the total time required for the circuit is minimized by
    \begin{align}
        LU+M\label{eq:circuit-time-u-geq-m}
    \end{align}
    %が最小であり、それを得られる最小の制御ビット数は2である。
    and the minimum number of control qubits to achieve this is 2.
\end{theorem}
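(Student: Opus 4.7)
The plan is to split the theorem into three claims: a universal lower bound of $LU+M$ on the total circuit time, a matching construction achieving it with two control qubits, and a separation showing that a single control qubit falls short. The lower bound rests on two essentially causal observations on the quantum timeline: first, all $L$ Mod-MULs act on the same target register holding the running modular product, so no two of them can execute in parallel, occupying at least $LU$ time in aggregate; second, the qubit controlling the final Mod-MUL can only be measured after that Mod-MUL finishes, and the measurement together with the preceding semi-classical corrections requires at least $M$. These two intervals are disjoint, giving the universal lower bound $LU+M$ regardless of how many control qubits one employs.

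For achievability with two control qubits, I would formalise the alternating schedule in Figure~\ref{fig:circuit-2qubits-dlp}: qubits $A$ and $B$ respectively control the odd- and even-indexed Mod-MULs, placed back-to-back on the target register so that Mod-MUL $i$ occupies $[(i-1)U,iU]$. Immediately after Mod-MUL $i$, the Hadamard-and-measurement of its control qubit starts and completes by time $(i-1)U+M\le iU$ because $U\ge M$; the remaining slack in that window is used for the classical reset $X^{m_{i-1}}$ and the semi-classical phase $S_{i+1}$ on the partner qubit, whose classical inputs $m_0,\ldots,m_{i-1}$ are all already available. Only the measurement of the very last qubit cannot be hidden behind a subsequent Mod-MUL, and it contributes the tail of $M$, making the total time exactly $LU+M$ and matching the lower bound.

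For the impossibility with a single control qubit, the sole qubit must be reused for every Mod-MUL, so between Mod-MUL $i$ and Mod-MUL $i+1$ the qubit must be measured, reset via $X^{m_{i-1}}$, and re-prepared in $\ket{+}$; crucially, measurement cannot run in parallel with any Mod-MUL that this qubit controls, which is every one of them. Each of the $L-1$ intermediate gaps and the final measurement therefore contribute $M$, for a total of at least $LU+LM>LU+M$ as soon as $L\ge 2$ and $M>0$. The main obstacle I anticipate is the scheduling verification for the two-qubit case: I need to check that every semi-classical phase $S_j$ can be computed from measurement outcomes that are actually available by the time it is needed, and that the combined classical corrections fit inside the $U-M$ slack of each window. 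This is precisely where the hypothesis $U\ge M$ is consumed in full, and any tighter causal dependency between outcomes and phases would break the matching.
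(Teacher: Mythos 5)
Your proposal is correct and follows essentially the same route as the paper's proof: the $LU$ lower bound from the serial constraint on the single target register plus the trailing $M$, the $L(U+M)$ cost forced by a single control qubit, and the interleaved two-qubit schedule that hides each measurement behind the next Mod-MUL using $U\geq M$. Your version merely makes the scheduling and causal-availability of the measurement outcomes more explicit than the paper does.
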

\begin{proof}
    %冪剰余計算の対象となるレジスタは1つのみでありそこで$L$回のMod-MULを実行する必要があるので、計算時間の下界は$L\times U$に、その後の測定時間$M$を加えた式\eqref{eq:circuit-time-u-geq-m}となる。
    Since there is only one register for modular exponentiation and $L$ Mod-MULs need to be executed there, the lower bound of the computation time is $L\times U$, and adding the subsequent measurement time $M$ results in \eqref{eq:circuit-time-u-geq-m}.
    %また、制御量子ビット数が1つのときはMod-MULを実行するたびに、1つのみの制御量子ビットにて測定と古典的条件つき演算を実行しなければならないので、全体の時間は
    Moreover, when there is one control qubit, measurement and classical conditional operations must be executed with only one control qubit each time a Mod-MUL is executed, so the total time is
    \begin{align}
        LU+LM=L(U+M).\label{eq:circuit-time-1control}
    \end{align}
    %となる。
    %一方、制御量子ビット数が2つになると、$U\geq M$より、Mod-MULを片方の制御量子ビットで制御実行している間に、もう片方の制御量子ビットで1つ前の測定・古典的条件つき演算を実行できるので、その全体の時間は式\eqref{eq:circuit-time-u-geq-m}となる。
    On the other hand, when there are two control qubits, since $U\geq M$, while one control qubit is executing the controlled Mod-MUL, the other control qubit can execute the measurement and classical conditional operations from the previous step, so the total time is given by \eqref{eq:circuit-time-u-geq-m}.
\end{proof}
\section{Decision of \lowercase{$k$} such that \lowercase{$k$}-controlling qubit trick is optimal in time when $U<M$}
%
%\textcolor{red}{(TODO: the case of $U<M$ as in my patent)}

%前セクションでは$U\geq M$の場合を扱ったが、$U<M$のときは最適な方法が異なる。
In the previous section, we dealt with the case where $U\geq M$, but when $U<M$, the optimal method differs.
%制御量子ビット数を$k\ll L$とする。
Let the number of control qubits be $k\ll L$.
%$k=1$のときに回路全体でかかる時間は式\eqref{eq:circuit-time-1control}である一方、$k\geq 2$では別の制御量子ビットで測定・古典的条件つき演算を実行する際にそれまでの測定結果がすべて必要となることから並列実行ができず、結果として
When $k=1$, the time taken for the entire cicuit is given by \eqref{eq:circuit-time-1control}, whereas for $k\geq 2$, all previous measurement results are required for executing measurements and classically conditioned operations with other control qubits, preventing parallel execution.
As a result, it takes a time of
\begin{align}
    U+LM.
\end{align}
%の時間がかかる。

%そこで、従来のように制御量子ビット1つずつで半古典逆量子フーリエ変換をするのではなく、図のように$k$量子ビットでまとめたso-called 半々古典逆量子フーリエ変換を用いる。
Therefore, instead of performing the semi-classical inverse quantum Fourier transform one control qubit at a time as traditionally done, we use the so-called half-semi-classical inverse quantum Fourier transform with $k$ qubits as shown in Figure~\ref{fig:half-semi-qft}.
%ここでは$k$量子ビットで並列に測定を実行できるので、$U<M$のときにはより高速となることが期待できる。
Here, measurements can be executed in parallel with $k$ qubits, so it is expected to be faster when $U<M$.
%この半々古典逆量子フーリエ変換にかかる時間は、$U<M$では$M$とみなすことができる。
The time taken for this half-semi-classical inverse quantum Fourier transform can be considered as $M$ when $U<M$.
%すると、回路全体でかかる時間は
Thus, the time taken for the entire circuit is
\begin{align}
    LU+\frac{L}{k}M=L\left(U+\frac{1}{k}M\right). \label{eq:circuit-time-hsqft}
\end{align}
%となる。

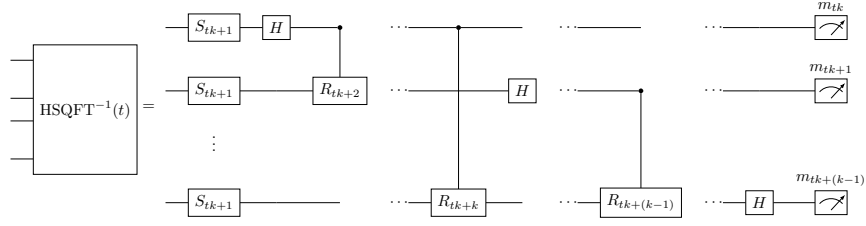
\begin{figure*}
    \centering
    \begin{tikzpicture}
        \node[scale=0.6]{
\begin{quantikz}[thin lines]
    & \gate[4]{\mathrm{HSQFT}^{-1}(t)} \\
    & \\
    & \\
    &
\end{quantikz}=\begin{quantikz}[thin lines]
    & \gate{S_{tk+1}} & \gate{H} & \ctrl{1}        & \cdots & \ctrl{3}        & \qw      & \cdots & \qw                 & \cdots & \qw      & \meter{$m_{tk}$} \\
    & \gate{S_{tk+1}} & \qw      & \gate{R_{tk+2}} & \cdots & \qw             & \gate{H} & \cdots & \ctrl{2}            & \cdots & \qw      & \meter{$m_{tk+1}$} \\
    & \vdots                                                                                                                                                  \\
    & \gate{S_{tk+1}} & \qw      & \qw             & \cdots & \gate{R_{tk+k}} & \qw      & \cdots & \gate{R_{tk+(k-1)}} & \cdots & \gate{H} & \meter{$m_{tk+(k-1)}$}
\end{quantikz}
        };
    \end{tikzpicture}
    \caption{
        %半々古典逆量子フーリエ変換。$R_j$および$S_j$の定義は図\ref{fig:circuit-standard-dlp}と\ref{fig:circuit-1qubit-dlp}と同じである。
        Half-semi-classical inverse quantum Fourier transform.
        The definitions of $R_j$ and $S_j$ are the same as in Figure~\ref{fig:circuit-standard-dlp} and \ref{fig:circuit-1qubit-dlp}.
    }
    \label{fig:half-semi-qft}
\end{figure*}

\begin{figure*}
    \centering
    \begin{tikzpicture}
        \node[scale=0.6]{
\begin{quantikz}
    \lstick{$\ket{+}$}          & \ctrl{3}                  & \qw                       & \qw                       & \gate[3]{\mathrm{HSQFT}^{-1}(0)}&     & \lstick{$\ket{+}$} & \ctrl{3}                  & \qw                       & \qw                       & \gate[3]{\mathrm{HSQFT}^{-1}(1)} & \cdots \\
    \lstick{$\ket{+}$}          & \qw                       & \ctrl{2}                  & \qw                       &                                 &     & \lstick{$\ket{+}$} & \qw                       & \ctrl{2}                  & \qw                       &                                  & \cdots \\
    \lstick{$\ket{+}$}          & \qw                       & \qw                       & \ctrl{1}                  &                                 &     & \lstick{$\ket{+}$} & \qw                       & \qw                       & \ctrl{1}                  &                                  & \cdots \\
    \lstick{$\ket{0\cdots 01}$} & \gate{h^{2^{L-1}}\bmod p} & \gate{h^{2^{L-2}}\bmod p} & \gate{h^{2^{L-3}}\bmod p} & \qw                             & \qw & \qw                & \gate{h^{2^{L-4}}\bmod p} & \gate{h^{2^{L-5}}\bmod p} & \gate{h^{2^{L-6}}\bmod p} & \qw                              & \cdots
\end{quantikz}
        };
    \end{tikzpicture}
    \caption{
        %半々古典逆量子フーリエ変換（図\ref{fig:half-semi-qft}）を用いた$k=3$のときの量子回路。
        Quantum cicuit for $k=3$ using the half-semi-classical inverse quantum Fourier transform (Figure~\ref{fig:half-semi-qft}).
    }
    \label{fig:use-half-semi-qft}
\end{figure*}
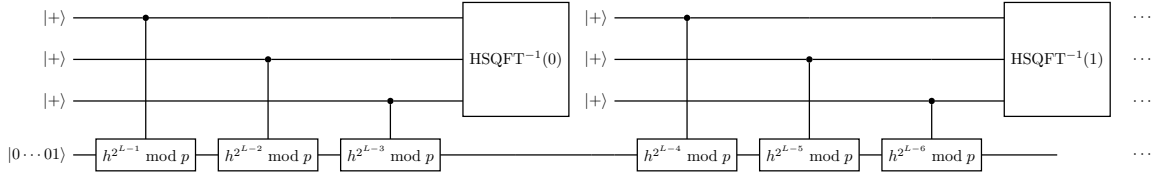

Let
\begin{align}
    U=\rho M \label{eq:u-rho-m}
\end{align}
where $(0<\rho<1)$,
Then, the following theorem holds.
%とすると、以下の定理が成り立つ。%半古典逆量子フーリエ変換を利用したときよりも回路全体でかかる時間は次の$k$で式\eqref{eq:circuit-time-1control}よりも小さくなる。
%回路全体でかかる時間が半古典逆量子フーリエ変換を利用した式\eqref{eq:circuit-time-1control}よりも小さくなる
\begin{theorem}
    %$U<M$のとき、半々古典逆量子フーリエ変換を利用することで回路全体でかかる時間が半古典逆量子フーリエ変換を利用したときよりも小さくなる$k$は
    When $U<M$, the value of $k$ for which the total time taken by the circuit using the half-smi-classical inverse quantum Fourier transform is less than when using the semi-classical inverse quantum Fourier transform is given by
    \begin{align}
        k>\frac{L}{L+\rho-L\rho}. \label{eq:minimum-k-u-l-m}
    \end{align}
    %を満たせば良い。
\end{theorem}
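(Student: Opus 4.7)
The plan is to directly compare the runtime of the half-semi-classical approach, namely $L(U + M/k)$ from~\eqref{eq:circuit-time-hsqft}, against the runtime $U + LM$ of the naive $k \geq 2$ semi-classical IQFT established immediately above~\eqref{eq:u-rho-m}. The target inequality $L(U + M/k) < U + LM$ is a single linear relation in $1/k$ once $U$ is eliminated, so the whole argument reduces to routine algebra and the theorem's proof is essentially a one-line rearrangement.

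First, I would substitute $U = \rho M$ from~\eqref{eq:u-rho-m} into the inequality and divide through by the positive quantity $M$, producing $L\rho + L/k < \rho + L$. Moving the $L\rho$ term across gives $L/k < L + \rho - L\rho$, which is exactly the form from which~\eqref{eq:minimum-k-u-l-m} will drop out.

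Second, I would take reciprocals. The only point requiring a brief justification is that the right-hand side $L + \rho - L\rho = L(1-\rho) + \rho$ is strictly positive: this is immediate since $0 < \rho < 1$ by the definition below~\eqref{eq:u-rho-m}, and $L \geq 2$ makes both summands strictly positive. Because both sides are positive, reciprocation flips the inequality, and multiplying through by $L > 0$ yields the claimed bound $k > L/(L + \rho - L\rho)$ in~\eqref{eq:minimum-k-u-l-m}.

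There is no substantive obstacle; the calculation is a mechanical linear rearrangement. The only subtlety worth flagging is the choice of baseline: one must compare the half-semi-classical runtime against the naive $k \geq 2$ figure $U + LM$ rather than the $k = 1$ figure $L(U + M)$ from~\eqref{eq:circuit-time-1control}, since against the latter every $k > 1$ would trivially improve the runtime and the stated nontrivial threshold $L/(L+\rho-L\rho)$ would not appear.
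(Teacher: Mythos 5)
Your proof is correct and follows essentially the same route as the paper: compare $L(U+M/k)$ from \eqref{eq:circuit-time-hsqft} against the $k\ge 2$ semi-classical time $U+LM$, substitute $U=\rho M$, and rearrange. Your explicit identification of $U+LM$ (rather than the $k=1$ time $L(U+M)$ of \eqref{eq:circuit-time-1control}) as the correct baseline is in fact a useful clarification, since the paper's prose cites \eqref{eq:circuit-time-1control} while its displayed inequality actually uses $U+LM$ on the right-hand side.
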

\begin{proof}
    %式\eqref{eq:circuit-time-1control}と式\eqref{eq:circuit-time-hsqft}を比較すると
    By comparing \eqref{eq:circuit-time-1control} and \eqref{eq:circuit-time-hsqft}, we have
    \begin{align}
        L\left(U+\frac{1}{k}M\right) < U+LM.
    \end{align}
    %式\eqref{eq:u-rho-m}を代入して変形すると、式\eqref{eq:minimum-k-u-l-m}となる。
    Substituting \eqref{eq:u-rho-m} and rearranging, we obtain \eqref{eq:minimum-k-u-l-m}.
\end{proof}

%前節とこの節を合わせることで、それぞれの場合における回路全体でかかる時間を最小にした方法で実装ができる。
By combining the previous section with this one, the total time taken by the circuit can be minimized in any case.

\section{Using another control register as clean qubits in R-ADD}

%ゲート数において、素因数分解向けのShorアルゴリズムと異なる減らし方がR-ADDにおいて可能である。
In terms of gate count, a different reduction method from Shor's algorithm for factoring is possible in R-ADD.
%R-ADDでは3--controlled NOT and 4--controlled NOTを用いるが、その構築に使う補助ビットがcleanかdirtyかで必要なゲート数が変化する。
R-ADD uses 3--controlled NOT and 4--controlled NOT gates, and the number of gates required varies depending on whether the auxiliary bits used for their construction are clean or dirty.
%\cite{yamaguchi2023experimentsresource,barenco1995elementarygates}で示されているように具体的には、3--controlled NOT gateにおいて1 clean auxiliary qubitを使うと3 Toffoli gatesで済む一方、1 dirty auxiliary qubitを使うと4 Toffoli gates必要となる。
Specifically, as shown in \cite{yamaguchi2023experimentsresource,barenco1995elementarygates}, a 3--controlled NOT gate requires 3 Toffoli gates when using 1 clean auxiliary qubit, while it requires 4 Toffoli gates when using 1 dirty auxiliary qubit.
%また、4--controlled NOT gateでは2 clean auxiliary qubitsを使うと5 Toffoli gates, 1 clean auxiliary qubitと1 dirty auxiliary qubitを使うと6 Toffoli gates, そして2 dirty auxiliary qubitsを使うと8 Toffoli gatesが必要となる。
For a 4--controlled NOT gate, 5 Toffoli gates are needed with 2 clean auxiliary qubits, 6 Toffoli gates with 1 clean and 1 dirty auxiliary qubit, and 8 Toffoli gates with 2 dirty auxiliary qubits.
%素因数分解のShorアルゴリズム\cite{yamaguchi,yamaguchi2023experimentsresource}ではclean auxiliary qubitsとして、制御レジスタの量子ビットをそれぞれの冪剰余計算モジュールを実行する前に使うことができるが、
%最後から2番目では1つはdirtyとなり、最後では2つともdirtyとなる。
In Shor's algorithm for factoring~\cite{yamaguchi,yamaguchi2023experimentsresource}, the qubits of the control register can be used as clean auxiliary qubits before executing each modular exponentiation module, but one becomes dirty in the second-to-last module, and both become dirty in the last module.
%一方、今回の離散対数問題のShorアルゴリズムでは制御レジスタが2つ存在するので、最初の制御レジスタから冪剰余計算をする場合はもう片方の制御レジスタを構成する量子ビットをclean qubitsとして使うことができる。
On the other hand, in Shor's algorithm for the DLP, there are two control registers, so when performing modular exponentiation from the first control register, the qubits constituting the other control register can be used as clean qubits.

\section{DLP Pattern Requiring More Gates Than Other Patterns}

%\textcolor{red}{(TODO: pattern 3 in ISEC version only targeting w/o semi-classical qft))}
%%\begin{enumerate}
%%    \item 古典手法と異なり、時間計算量が $p$ と $q$ の両方含む形で表される
%%\end{enumerate}
%
%%\textcolor{red}{（この節は丸ごと削除、もしくはAppendixへ移動予定）}
%
%%本節では，量子計算機において量子ビット数を固定したときに、他の離散対数問題よりも解くのにゲート数が多くなるパターンを導入する．
%本節では，半古典逆量子フーリエ変換や半々古典逆量子フーリエ変換を利用しないときに限定するものの、量子計算機において量子ビット数を固定したときに、他の離散対数問題よりも解くのにゲート数が多くなる群を導入する．
In this section, we introduce a group that requires more gates to solve than other DLP groups when the number of qubits is fixed in a quantum computer, although we limit ourselves to cases where neither the semi-classical inverse quantum Fourier transform nor the half-semi-classical inverse quantum Fourier transform is used.
%%このパターンを他のパターンに倣ってパターン3とする．
%%このパターンを\textit{Same~\#qubits}型とする。
%このパターンを\textit{Same-\#qubits group}とする。
This group is referred to as the \textit{Same-\#qubits group}.
%量子ビット数（式\eqref{eq:general-space-complexity}）を固定したときに必要なゲート数（式\eqref{eq:general-time-complexity}）が最も多くなる$p$, $q$の組合せを求める．
We seek the combination of $p$ and $q$ that requires the most gates \eqref{eq:general-time-complexity} when the number of qubits \eqref{eq:general-space-complexity} is fixed.
%この問題は量子ビット数についてより一般化すると，次のように定義できる．
This problem can be generalized with respect to the number of qubits and defined as follows:
%%この問題は今回用いた冪剰余のオラクルに限らずに一般化すると，次のように定義できる．
\begin{definition}[
    %量子ビット数を固定したゲート数の最大化問題
    Maximization problem of gate count with fixed qubit number
    ]
    \begin{align}
        \alpha x+\beta y-\gamma=0\label{eq:constraint}
    \end{align}
    %の制約条件つきの
    subject to the constraint
    \begin{align}
        x^m y^n
    \end{align}
    %の最大化問題．
    maximization problem.
    %ただし，$\alpha,\beta,\gamma,m,n,x,y>0$．
    Here, $\alpha,\beta,\gamma,m,n,x,y>0$.
\end{definition}

%これは制約条件（式\eqref{eq:constraint}）を$y=(\gamma-\alpha x)/\beta$（$y>0$より$x<\gamma/\alpha$）と変形してから
This can be solved by transforming the constraint \eqref{eq:constraint} to $y=(\gamma-\alpha x)/\beta$ (since $y>0$, $x<\gamma/\alpha$) and defining
\begin{align}
    F(x)=x^m y^n=x^m\left(\frac{\gamma-\alpha x}{\beta}\right)^n
\end{align}
%と定義し，この微分を利用して増減表（表\ref{tab:first-derivative-test}）を書くと解くことができる．
and using its derivative to write a first derivative test (Table~\ref{tab:first-derivative-test}).
%この結果より，
From this result,
\begin{align}
    \lceil\log p\rceil&=x=\frac{\gamma m}{\alpha(m+n)}\label{eq:x-max}\\
    \lfloor\log q\rfloor&=y=\frac{\gamma n}{\beta(m+n)}\label{eq:y-max}
\end{align}
%であるときに$F(x)$が最大となり，つまり最も多いゲート数が量子計算機にて必要となる．
is when $F(x)$ is maximized, meaning the most gates are required on a quantum computer.

\begin{table}[tb]
    \centering
    \caption{
        %最大化問題を解くための増減表．
        A first derivative test for solving the maximization problem.
    }
    %\begin{center}
        \begin{tabular}{|c|c|c|c|c|c|}
            \hline
            $x$ & $0$ & $\cdots$ & $\frac{\gamma m}{\alpha(m+n)}$ & $\cdots$ & $\frac{\gamma}{\alpha}$ \\ \hline
            $F'(x)$ & $0$ & $+$ & $0$ & $-$ & \\ \hline
            $F(x)$ & $0$ & $\nearrow$ & $\frac{\gamma^{m+n}m^m n^n}{\alpha^m \beta^n (m+n)^{m+n}}$ & $\searrow$ & $0$ \\
            \hline
        \end{tabular}
    %\end{center}
    \label{tab:first-derivative-test}
\end{table}

%これは$x=\lceil\log p\rceil$と$y=\lfloor\log q\rfloor$の間の関係式として書き換えることができる．
This can be rewritten as a relational expression between $x=\lceil\log p\rceil$ and $y=\lfloor\log q\rfloor$:
\begin{align}
    \lceil\log p\rceil=\frac{\beta m}{\alpha n}\lfloor\log q\rfloor.\label{eq:pattern3-logp-logq-relation}
\end{align}
%$p,q\gg 1$のときは$\lceil\log p\rceil\approx\log p$, $\lfloor\log q\rfloor\approx\log q$であるので
When $p,q\gg 1$, $\lceil\log p\rceil\approx\log p$ and $\lfloor\log q\rfloor\approx\log q$, so
\begin{align}
    \log p&\approx\frac{\beta m}{\alpha n}\log q \\
    p&\approx q^\frac{\beta m}{\alpha n}\label{eq:difficult-p-q-in-quantum}
\end{align}
%と表される．
is expressed.

%これは冪剰余の具体的な計算と組み合わせることでシンプルな形式となる．
This becomes a simple form when combined with specific calculations of modular exponentiations.
%例えばQ-ADDを利用すると，空間計算量は式\eqref{eq:qadd-space-complexity}を式\eqref{eq:general-space-complexity}に代入したもの、時間計算量は式\eqref{eq:qadd-time-complexity}である
%ため，$\alpha=2$, $\beta=2$, $\gamma=\qubits-4$, $m=3$, $n=1$となる．このうち$\gamma$は式\eqref{eq:difficult-p-q-in-quantum}に現れないので，
%\begin{align}
%    p\approx q^3\label{eq:qadd-pattern3}
%\end{align}
%例えばR-ADDを利用すると，空間計算量は式\eqref{eq:radd-space-complexity}を式\eqref{eq:general-space-complexity}に代入したもの、時間計算量は式\eqref{eq:radd-time-complexity}である
For example, using R-ADD, the space complexity is the substitution of \eqref{eq:radd-space-complexity} into \eqref{eq:general-space-complexity}, and the time complexity is \eqref{eq:radd-time-complexity},
%ため，$\alpha=3$, $\beta=2$, $\gamma=\qubits-4$, $m=2$, $n=1$となる．このうち$\gamma$は式\eqref{eq:difficult-p-q-in-quantum}に現れないので，
so $\alpha=3$, $\beta=2$, $\gamma=\qubits-4$, $m=2$, $n=1$.
Among these, $\gamma$ does not appear in \eqref{eq:difficult-p-q-in-quantum}, so
\begin{align}
    p\approx q^{4/3}\label{eq:radd-pattern3}
\end{align}
%となる．
is obtained.
%同様に、Q-ADDの場合には
Similarly, in the case of Q-ADD, we have
\begin{align}
    p\approx q^3.\label{eq:qadd-pattern3}
\end{align}
%となる。

%Simulating Quantum Circuitsでは、時間計算量についてSchnorr groupと同一になるようなsafe-prime groupの$p$を求めたが、そこでは空間計算量については考慮していなかった。
In Simulating Quantum Circuits, $p$ of a safe-prime group was sought to match the time complexity of the Schnorr group, but space complexity was not considered there.
%このSame-\#qubits groupを用いることで、半古典逆量子フーリエ変換を利用しない場合には限定するものの、Schnorr groupと同一の空間計算量かつ、そのときに可能な最大の時間計算量をもつ$p$と$q$の組合せを求められる。
By using the same-\#qubits group, although limited to cases where the semi-classical inverse quantum Fourier transforms are not used, it is possible to find a combination of $p$ and $q$ with the same space complexity as the Schnorr group and the maximum possible time complexity.
%表\ref{tab:r-add-2048}と同様に式\eqref{eq:fitting}を用いたときの結果は表のようになる。
The results when using \eqref{eq:fitting} are shown in the Table~\ref{tab:r-add-same-qubits}, similar to Table~\ref{tab:r-add-2048}.
%表\ref{tab:schnorr-strength-in-safeprime}におけるsafe-primeほどではないものの、小さな$p$でSchnorr groupのリソース消費量を上回ることができている。
Although not as strong as the safe-prime group in Table~\ref{tab:schnorr-strength-in-safeprime}, it can exceed the resource consumption of the Schnorr group with a smaller $p$.

\begin{table}
    \centering
    \caption{
        %R-ADD~\cite{radd}を冪剰余計算に利用した際の，$p$が2048ビットのときのSchnorr groupと、その量子ビット数でのsame-\#qubits groupにおけるリソース推定結果
        Resource estimation results for the Schnorr group when $p$ is 2048 bits and the same-\#qubits group at the number of qubits of the Schnorr group, using R-ADD~\cite{radd} for modular exponentiation calculations.
    }
    \begin{tabular}{|l|l|l|}
        \hline
        ($|p|,|q|$) &
        \begin{tabular}{c}
            Schnorr\\(2048, 256)
        \end{tabular}
        &
        \begin{tabular}{c}
            Same-\#qubits\\(1479, 1109)
        \end{tabular}
        \\ \hline
        \begin{tabular}{l}
            \#qubits
        \end{tabular} & 6659 & 6659 \\ \hline
        \begin{tabular}{l}
            \#gates \\before optimization
        \end{tabular} & $3.09\times 10^{12}$ & $6.96\times 10^{12}$ \\ \hline
        \begin{tabular}{l}
            \#gates \\after optimization
        \end{tabular} & $1.00\times 10^{12}$ & $2.26\times 10^{12}$ \\ \hline
        \begin{tabular}{l}
            depth \\before optimization
        \end{tabular} & $1.67\times 10^{12}$ & $3.75\times 10^{12}$ \\ \hline
        \begin{tabular}{l}
            depth \\after optimization
        \end{tabular} & $7.55\times 10^{11}$ & $1.70\times 10^{12}$ \\ \hline
        \begin{tabular}{l}
            \#non-Clifford gates \\before optimization
        \end{tabular} & $1.43\times 10^{12}$ & $3.23\times 10^{12}$ \\ \hline
        \begin{tabular}{l}
            \#non-Clifford gates \\after optimization
        \end{tabular} & $9.59\times 10^{11}$ & $2.16\times 10^{12}$ \\ \hline
    \end{tabular}
    \label{tab:r-add-same-qubits}
\end{table}

%これはSafe-prime型やSchnorr型と比べて異なる$p$と$q$の関係式であり、それらの型が量子計算機にとってどれほど難しいかを示す指標になりうる。

%なお、Q-ADDを用いた36量子ビット回路実行実験では具体例として$p=4093$, $q=31$つまり$p$が12ビット，$q$が5ビットのSame-\#qubits groupを実験している。
In the experiment executing a 36 qubits circuit using Q-ADD, a specific example of $p=4093$, $q=31$ was tested, meaning $p$ is 12 bits and $q$ is 5 bits in the same-\#qubits group.
%その値は式\eqref{eq:qadd-pattern3}に従っていないように見えるが，これはまだ$p,q\gg 1$といえず，$\lceil\log p\rceil\not\approx \log p$および$\lfloor\log q\rfloor\not\approx \log q$となるためである．
This values do not seem to follow \eqref{eq:qadd-pattern3} because $p,q\gg 1$ is not the case, leading to $\lceil\log p\rceil\not\approx \log p$ and $\lfloor\log q\rfloor\not\approx \log q$.
%そのシミュレーションは30分で完了し、他のサンプルと同様に図\ref{fig:success-prob-heatmap},\ref{fig:success-prob-boxplot},\ref{fig:success-prob-boxplot_p}に表れている。
The simulation was completed in 30 minutes and is represented in Figures~\ref{fig:success-prob-boxplot}, \ref{fig:success-prob-boxplot_p}, and \ref{fig:success-prob-heatmap}, similar to other samples.

\begin{figure*}[h]
    \includegraphics[scale=0.5]{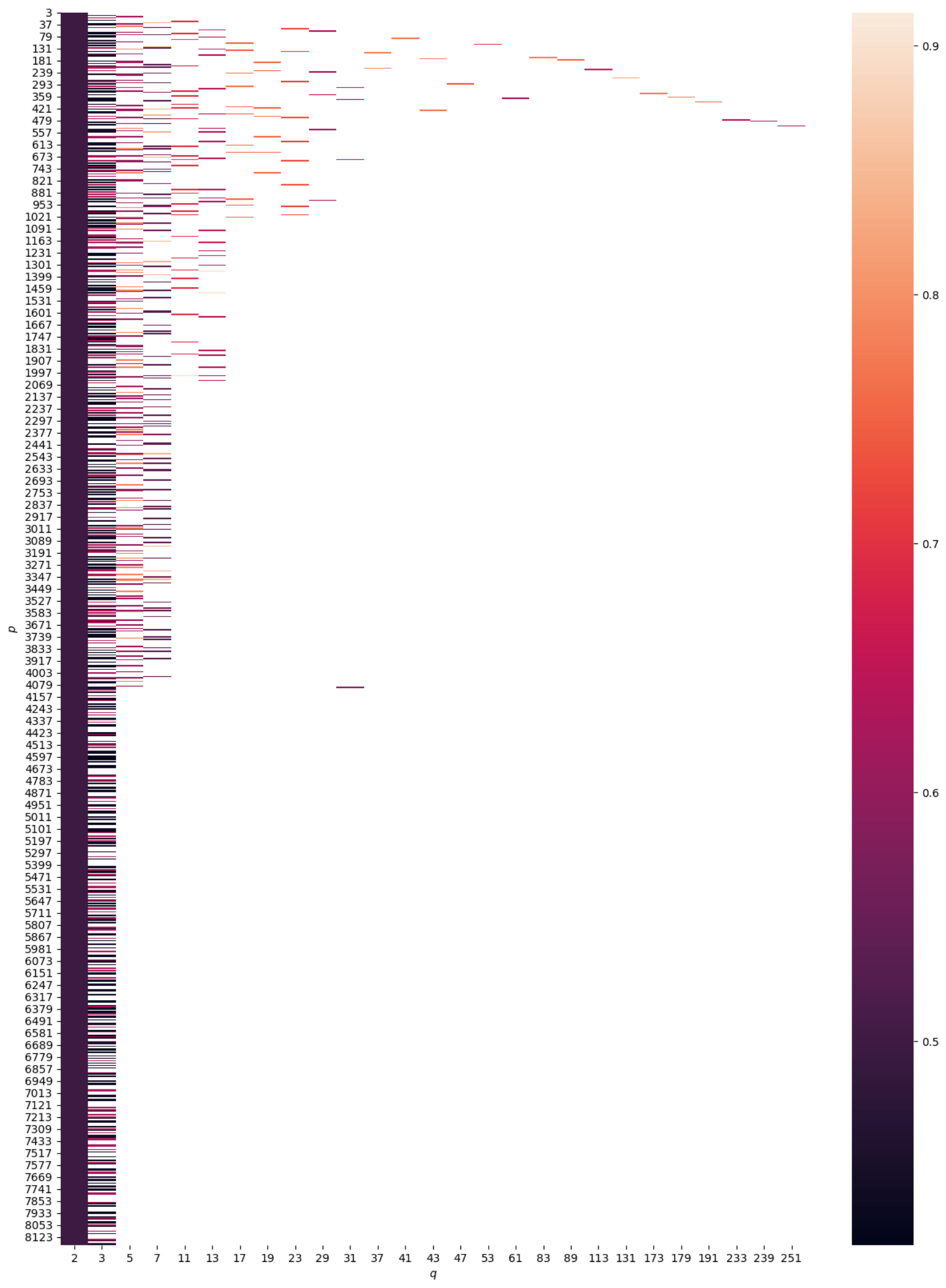}
    \caption{
        %各 $p$, $q$ の組合せにおける成功確率のヒートマップ。明るい色ほど成功確率が1に近づく。
        Heatmap of success probabilities for each combination of $p$ and $q$. Brighter colors indicate success probabilities closer to 1.
    }
    \label{fig:success-prob-heatmap}
\end{figure*}

% If you have acknowledgments, this puts in the proper section head.
%\begin{acknowledgments}
% put your acknowledgments here.
%\end{acknowledgments}

% Create the reference section using BibTeX:
\bibliography{bibliography}

% ===start: IEEE===
%\EOD
% ===end: IEEE===

\end{document}